\newtheorem{theorem}{Theorem}[section]
\newtheorem{lemma}[theorem]{Lemma}
\newtheorem{fact}[theorem]{Fact}
\newtheorem{corollary}[theorem]{Corollary}
\newtheorem{claim}[theorem]{Claim}
\newtheorem{proposition}[theorem]{Proposition}
\newtheorem{definition}[theorem]{Definition}
\newtheorem{remark}[theorem]{Remark}  \newcommand{\bproof}{\noindent{\it Proof}}
\newcommand{\cproof}{\noindent{\it Proof of Claim}}
\newcommand{\eproof}{\hspace*{\fill}$\rule{2mm}{2mm}$~~~~~\bigskip}
\renewenvironment{proof}{\bproof. }{\eproof}
\newcommand{\PIT}{\mbox{\small\rm PIT}}
\newcommand{\Rad}{\mbox{\small\rm Rad}}
\newcommand{\Perm}{\ensuremath{Perm}}
\newcommand{\NEXP}{\mbox{\small\rm NEXP}}
  \newcommand{\NP}{\mbox{\rm NP}}
    \newcommand{\MA}{\mbox{\rm MA}}
  \renewcommand{\P}{\mbox{\rm P}}
\newcommand{\ppoly}{\mbox{\rm P/poly}}
 \newcommand{\poly}{\mbox{\rm
poly}}
\newcommand{\iso}{\ensuremath{\cong} }
 \newcommand{\Z}{\mathbb{Z}}
\newcommand{\F}{\ensuremath{\mathbb{F}}}
\renewcommand{\angle}[1]{\langle #1\rangle}
 \newcommand{\Prob}{\mbox{\rm Prob}}
 \newcommand{\R}{{R}}
\renewcommand{\mod}{\mbox{\textrm mod}}
\renewcommand{\angle}[1]{\langle #1\rangle}
\newcommand{\cA}{\mathcal{A}}
\title{New results on Noncommutative and Commutative Polynomial
  Identity Testing}
\author{V.~Arvind, Partha Mukhopadhyay, and Srikanth Srinivasan\\
Institute of Mathematical Sciences\\ C.I.T Campus,Chennai  600 113,
India\\
\tt{\{arvind,partham,srikanth\}@imsc.res.in}
}
\date{}
\begin{document}

\maketitle

\begin{abstract} 
  Using ideas from automata theory we design a new efficient
  (deterministic) identity test for the \emph{noncommutative}
  polynomial identity testing problem (first introduced and studied in
  \cite{RS05,BW05}). More precisely, given as input a noncommutative
  circuit $C(x_1,\cdots,x_n)$ computing a polynomial in
  $\F\{x_1,\cdots,x_n\}$ of degree $d$ with at most $t$ monomials,
  where the variables $x_i$ are noncommuting, we give a deterministic
  polynomial identity test that checks if $C\equiv 0$ and runs in time
  polynomial in $d, n, |C|$, and $t$.

  The same methods works in a black-box setting: Given a noncommuting
  black-box polynomial $f\in\F\{x_1,\cdots,x_n\}$ of degree $d$ with
  $t$ monomials we can, in fact, reconstruct the entire polynomial $f$
  in time polynomial in $n,d$ and $t$. Indeed, we apply this idea to
  the reconstruction of black-box noncommuting algebraic branching
  programs (the ABPs considered by Nisan in \cite{N91} and Raz-Shpilka
  in \cite{RS05}). Assuming that the black-box model allows us to
  query the ABP for the output at any given gate then we can
  reconstruct an (equivalent) ABP in deterministic polynomial time.

  Finally, we turn to commutative identity testing and explore the
  complexity of the problem when the coefficients of the input
  polynomial come from an arbitrary finite commutative ring with unity
  whose elements are uniformly encoded as strings and the ring
  operations are given by an oracle. We show that several algorithmic
  results for polynomial identity testing over fields also hold when
  the coefficients come from such finite rings.
\end{abstract}

\section{Introduction}\label{intro}

Polynomial identity testing (denoted $\PIT$) over fields is a well
studied algorithmic problem: given an arithmetic circuit $C$ computing
a polynomial in $\F[x_1, x_2, \cdots, x_n]$ over a field $\F$, the
problem is to determine whether the polynomial computed by $C$ is
identically zero. The problem is also studied when the input
polynomial $f$ is given only via black-box access. I.e.\ we can evaluate
it at any point in $\F^n$ or in $\F'^n$ for a field extension $\F'$ of
$\F$. When $f$ is given by a circuit the problem is in randomized
polynomial time. Even in the black-box setting, when $|\F|$ is
suitably larger than $\deg(f)$, the problem is in randomized
polynomial time. A major challenge it to obtain deterministic
polynomial time algorithms even for restricted versions of the
problem. The results of Impagliazzo and Kabanets \cite{KI03} show that
the problem is as hard as proving superpolynomial circuit lower
bounds. Indeed, the problem remains open even for depth-3 arithmetic
circuits with an unbounded $\Sigma$ gate as output \cite{DS05,KS07}.

As shown by Nisan \cite{N91} noncommutative algebraic computation is
somewhat easier to prove lower bounds. Using a rank argument Nisan has
shown exponential size lower bounds for noncommutative formulas
(and noncommutative algebraic branching programs) that compute the
noncommutative permanent or determinant polynomials in the ring
$\F\{x_1,\cdots,x_n\}$ where $x_i$ are noncommuting variables. Thus,
it seems plausible that identity testing in the noncommutative setting
ought to be easier too. Indeed, Raz and Shpilka in \cite{RS05} have
shown that that for noncommutative formulas (and algebraic branching
programs) there is a deterministic polynomial time algorithm for
polynomial identity testing. However, for noncommutative circuits the
situation is somewhat different.  Bogdanov and Wee in \cite{BW05} show
using Amitsur-Levitzki's theorem that identity testing for
\emph{polynomial degree} noncommutative circuits is in randomized
polynomial time.  Basically, the Amitsur-Levitzki theorem allows them
to randomly assign elements from a matrix algebra $M_k(\F)$ for the
noncommuting variables $x_i$, where $2k$ exceeds the degree of the
circuit.

The main contribution of this paper is the use of ideas from
\emph{automata theory} to design new efficient (deterministic)
polynomial identity tests for \emph{noncommutative} polynomials.  More
precisely, given a noncommutative circuit $C(x_1,\cdots,x_n)$
computing a polynomial of degree $d$ with $t$ monomials in
$\F\{x_1,\cdots,x_n\}$, where the variables $x_i$ are noncommuting, we
give a deterministic polynomial identity test that checks if $C\equiv
0$ and runs in time polynomial in $d, |C|, n$, and $t$. The main idea in
our algorithm is to think of the noncommuting monomials over the $x_i$
as words and to design finite automata that allow us to distinguish
between different words. Then, using the connection between automata,
monoids and matrix rings we are able to deterministically choose a
relatively small number of matrix assignments for the noncommuting
variables to decide if $C\equiv 0$. Thus, we are able to avoid using
the Amitsur-Levitzki theorem. Indeed, using our automata theory method
we can easily an alternative proof of (a weaker) version of
Amitsur-Levitzki which is good enough for algorithmic purposes as in
\cite{BW05} for example.

Our method actually works in a black-box setting. In fact, given a
noncommuting black-box polynomial $f\in \F\{x_1,\cdots,x_n\}$ of
degree $d$ with $t$ monomials, which we can evaluate by assigning
matrices to $x_i$, we can reconstruct the entire polynomial $f$ in
time polynomial in $n,d$ and $t$.

Furthermore, we also apply this idea to \emph{black-box} noncommuting
algebraic branching programs. We extend the result of Raz and Shpilka
\cite{RS05} by giving an efficient deterministic reconstruction
algorithm for black-box noncommuting algebraic branching programs
(wherein we are allowed to only query the ABP for input variables set
to matrices of polynomial dimension). Our black-box model assumes that
we can query for the output of \emph{any gate} of the ABP, not just the
output gate.

We now motivate and explain the other results in the paper.  Recently,
in \cite{AM07} we studied $\PIT$ (the usual commuting variables
setting) and its connection to the polynomial ideal membership
problem. Although ideal membership is EXPSPACE-complete, there is an
interesting similarity between the two problems which is the
motivation for the present paper. Suppose $I\subset
\F[x_1,\cdots,x_n]$ is an ideal generated by polynomials
$g_1,\cdots,g_r\in\F[x_1,\cdots,x_k]$ and $f\in \F[x_1,\cdots,x_n]$.
We observe that $f\in I$ if and only if $f$ is identically zero in the
ring $\F[x_1,\cdots,x_k]/I[x_{k+1},\cdots,x_n]$. Thus, ideal
membership is easily reducible to polynomial identity testing when the
coefficient ring is $\F[x_1,\cdots,x_k]/I$. Consequently, identity
testing for the coefficient ring $\F[x_1,\cdots,x_k]/I$ is
EXPSPACE-hard even when the polynomial $f$ is given explicitly as a
linear combination of monomials.

This raises the question about the complexity of $\PIT$ for a
polynomial ring $R[x_1,\cdots,x_n]$ where $R$ is a commutative ring
with unity. How does the complexity depend on the structure of the
ring $R$? We give a precise answer to this question in this paper.  We
show that the algebraic structure of $R$ is not important. It suffices
that the elements of $R$ have polynomial-size encoding, and w.r.t.
this encoding the ring operations can be efficiently performed. This
is in contrast to the ring $\F[x_1,\cdots,x_k]/I$: we have double
exponential number of elements of polynomial degree in
$\F[x_1,\cdots,x_k]$ and the ring operations in $\F[x_1,\cdots,x_k]/I$
are essentially ideal membership questions and hence computationally
hard.

More precisely, we study polynomial identity testing for \emph{finite}
commutative rings $R$, where we assume that the elements of $R$ are
uniformly encoded as strings in $\{0,1\}^m$ with two special strings
encoding $0$ and $1$, and the ring operations are carried out by
queries to the \emph{ring oracle}.

\section{Noncommutative Polynomial Identity Testing}\label{dit}

Recall that an \emph{arithmetic circuit} $C$ over a field 
$\F$ is defined as follows: $C$ takes as inputs, a set of
indeterminates (either commuting or noncommuting) and elements
from $\F$ as scalars. If $f,g$ are the inputs of an addition gate, then
the output will be $f+g$.  Similarly for a multiplication gate the
output will be $fg$. For noncommuting variables the circuit respect
the order of multiplication. An arithmetic circuit is a formula if the
fan-out of every gate is at most one.

Noncommutative identity testing was studied by Raz and Shpilka in
\cite{RS05} and Bogdanov and Wee in \cite{BW05}. In the Bogdanov-Wee
paper, they considered a polynomial $f$ of small degree over
$\F\{x_1,\cdots,x_n\}$, for a field $\F$, given by an arithmetic
circuit. They were able to give a randomized polynomial time
algorithm for the identity testing of $f$. The key feature of their
algorithm was a reduction from noncommutative identity testing to
commutative identity testing which is based on a classic theorem of
Amitsur and Levitzki~\cite{AL50} about minimal identities for
algebras.

Raz and Shpilka \cite{RS05} give a deterministic polynomial-time
algorithm for noncommutative formula identity testing by first
converting a homogeneous formula into a noncommutative algebraic
branching program (ABP), as done in \cite{N91}.

In this section we study the noncommutative polynomial identity
testing problem. Using simple ideas from automata theory, we design a
new deterministic identity test that runs in polynomial time if the
input circuit is sparse and of small degree. Our algorithm works with
only black-box access to the noncommuting polynomial, and we can even
efficiently reconstruct the polynomial.

We will first describe the algorithm to test if a sparse polynomial of
polynomial degree over noncommuting variables is identically zero.
Then we give an algorithm that reconstructs this sparse polynomial.
Though the latter result subsumes the former, for clarity of
exposition, we describe both. Furthermore, we note that we can assume
that the polynomial is given as an arithmetic circuit over a field
$\F$.

In the case of commuting variables, \cite{OT88} gives an
interpolation algorithm that computes the given sparse polynomial, and
thus can be used for identity testing. It is not clear how to
generalize this algorithm to the noncommutative setting. Our identity
testing algorithm evaluates the given polynomial at specific,
well-chosen points in a matrix algebra (of polynomial dimension over the
base field), such that any non-zero sparse polynomial is guaranteed to
evaluate to a non-zero matrix at one of these points. The
reconstruction algorithm uses the above identity testing algorithm as
a subroutine in a prefix-based search to find all the monomials and
their coefficients.

We now describe the identity testing algorithm informally.  Our idea
is to view each monomial as a short binary string. A sparse
polynomial, hence, is given by a polynomial number of such strings
(and the coefficients of the corresponding monomials). The algorithm
proceeds in two steps; in the first step, we construct a small set of
finite automata such that, given any small collection of short binary
strings, at least one automaton from the set accepts exactly one
string from this collection; in the second step, for each of the
automata constructed, we construct a tuple of points over a matrix
algebra over $\F$ such that the evaluation of any monomial at the
tuple `mimics' the run of the corresponding string on the automaton.
Now, given any non-zero polynomial $f$ of small degree with few terms,
we are guaranteed to have constructed an automaton $A$ `isolating' a
string from the collection of strings corresponding to monomials in
$f$. We then show that evaluating $f$ over the tuple corresponding to
$A$ gives us a non-zero output: hence, we can conclude $f$ is
non-zero. We now describe both algorithms formally.

\subsection{Preliminaries}\label{dit_prelim}

We first recall some standard automata theory notation (see, for
example, \cite{HU78}). Fix a finite automaton $A = (Q,\delta,q_0,q_f)$
which takes as input strings in $\{0,1\}^*$. $Q$ is the set of states
of $A$, $\delta:Q\times \{0,1\}\rightarrow Q$ is the transition
function, and $q_0$ and $q_f$ are the initial and final states
respectively (throughout, we only consider automata with unique
accepting states). For each letter $b\in\{0,1\}$, let
$\delta_b:Q\rightarrow Q$ be the function defined by: $\delta_b(q) =
\delta(q,b)$. These functions generate a submonoid of the monoid of
all functions from $Q$ to $Q$. This is the transition monoid of the
automaton $A$ and is well-studied in automata theory: for example, see
\cite[page 55]{Str94}. We now define the $0$-$1$ matrix $M_b \in
\F^{|Q|\times |Q|}$ as follows:
\[
M_b(q,q') = \left\{\begin{array}{cc}
1 & \textrm{if $\delta_b(q) = q'$,}\\
0 & \textrm{otherwise.} \end{array}\right.
\]

The matrix $M_b$ is simply the adjacency matrix of the graph of the
function $\delta_b$. As the entries of $M_b$ are only zeros and ones,
we can consider $M_b$ to be a matrix over any field $\F$.

Furthermore, for any $w=w_1w_2\cdots w_k\in\{0,1\}^*$ we define the
matrix $M_w$ to be the matrix product $M_{w_1}M_{w_2}\cdots M_{w_k}$.
If $w$ is the empty string, define $M_w$ to be the identity matrix of
dimension $|Q|\times |Q|$. For a string $w$, let $\delta_w$ denote the
natural extension of the transition function to $w$; if $w$ is the
empty string, $\delta_w$ is simply the identity function. It is easy
to check that:
\begin{equation}
\label{dit_prelim_eqn}
M_w(q,q') = \left\{\begin{array}{cc}
1 & \textrm{if $\delta_w(q) = q'$,}\\
0 & \textrm{otherwise.} \end{array}\right.
\end{equation}
Thus, $M_w$ is also a matrix of zeros and ones for any string $w$.
Also, $M_w(q_0,q_f) = 1$ if and only if $w$ is accepted by the
automaton $A$.

\subsection{The output of a circuit on an automaton}\label{dit_output}

Now, we consider the ring $\F\{x_1,\cdots,x_n\}$ of polynomials with
noncommuting variables $x_1,\cdots,x_n$ over a field $\F$. Let $C$ be
a noncommutative arithmetic circuit computing a polynomial $f\in
\F\{x_1,\cdots,x_n\}$. Let $d$ be an upper bound on the degree of
$f$. We can consider monomials over the noncommuting variables
$x_1,\cdots,x_n$ as strings over an alphabet of size $n$. For our
construction in Section~\ref{dit_auto_crt}, it is convenient to encode
the variables $x_i$ in the alphabet $\{0,1\}$. We do this by encoding
the variable $x_i$ by the string $v_i=01^i0$, which is basically a
unary encoding with delimiters. Clearly, each monomial over the
$x_i$'s of degree at most $d$ maps uniquely to a binary string of
length at most $d(n+2)$.

Let $A = (Q,\delta,q_0,q_f)$ be a finite automaton over the alphabet
$\{0,1\}$. With respect to automaton $A$ we have matrices $M_{v_i}\in
\F^{|Q|\times |Q|}$ as defined in Section \ref{dit_prelim}, where each
$v_i$ is the binary string that encodes $x_i$. We are interested in
the output matrix obtained when the inputs $x_i$ to the circuit $C$
are replaced by the matrices $M_{v_i}$. This output matrix is defined
in the obvious way: the inputs are $|Q|\times |Q|$ matrices and we do
matrix addition and matrix multiplication at each addition (resp.
multiplication) of the circuit $C$. We define the \emph{output of $C$
  on the automaton $A$} to be this output matrix $M_{out}$. Clearly,
given circuit $C$ and automaton $A$, the matrix $M_{out}$ can be
computed in time $\poly(|C|,|A|,n)$.

We observe the following property: the matrix output $M_{out}$ of $C$
on $A$ is determined completely by the polynomial $f$ computed by $C$;
the structure of the circuit $C$ is otherwise irrelevant. This is
important for us, since we are only interested in $f$. In particular,
the output is always $0$ when $f\equiv 0$.

More specifically, consider what happens when $C$ computes a
polynomial with a single term, say $f(x_1,\cdots,x_n) = cx_{j_1}\cdots
x_{j_k}$, with a non-zero coefficient $c\in\F$. In this case, the
output matrix $M_{out}$ is clearly the matrix $cM_{v_{j_1}}\cdots
M_{v_{j_k}} = cM_w$, where $w=v_{j_1}\cdots v_{j_k}$ is the binary
string representing the monomial $x_{j_1}\cdots x_{j_k}$. Thus, by
Equation \ref{dit_prelim_eqn} above, we see that the entry
$M_{out}(q_0,q_f)$ is $0$ when $A$ rejects $w$, and $c$ when $A$
accepts $w$. In general, suppose $C$ computes a polynomial $f =
\sum_{i=1}^{t}c_i m_i$ with $t$ nonzero terms, where $c_i\in
\F\setminus\{0\}$ and $m_i = \prod_{j=1}^{d_i}x_{i_j}$, where $d_i\leq
d$. Let $w_i = v_{i_1}\cdots v_{i_{d_i}}$ denote the binary string
representing monomial $m_i$.  Finally, let $S_A^f =
\{i\in\{1,\cdots,t\}~|~A \textrm{ accepts } w_i\}$.

\begin{theorem}\label{autothm}
Given any arithmetic circuit $C$ computing polynomial $f\in
\F\{x_1,\cdots,x_n\}$ and any finite automaton $A =
(Q,\delta,q_0,q_f)$, then the output $M_{out}$ of $C$ on $A$ is such
that $M_{out}(q_0,q_f) = \sum_{i\in S_A^f}c_i$.
\end{theorem}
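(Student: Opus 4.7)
The plan is to argue that the map $\phi: \F\{x_1,\ldots,x_n\} \to \F^{|Q|\times|Q|}$ defined by $x_i \mapsto M_{v_i}$ extends to a well-defined $\F$-algebra homomorphism, so that the matrix output $M_{out}$ of the circuit $C$ depends only on the polynomial $f$ computed by $C$, not on the circuit structure. Concretely, an easy induction on the depth of $C$ shows that the matrix produced at each gate is exactly the image under $\phi$ of the polynomial computed at that gate: addition gates correspond to matrix addition and multiplication gates (respecting order, since the variables are noncommuting and matrix multiplication is noncommutative) correspond to matrix multiplication. Therefore $M_{out} = \phi(f) = f(M_{v_1},\ldots,M_{v_n})$.

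Next I would expand $\phi(f)$ using linearity. Writing $f = \sum_{i=1}^{t} c_i m_i$ with $m_i = x_{i_1} x_{i_2} \cdots x_{i_{d_i}}$, we have
\[
M_{out} \;=\; \sum_{i=1}^{t} c_i \,\phi(m_i) \;=\; \sum_{i=1}^{t} c_i \, M_{v_{i_1}} M_{v_{i_2}} \cdots M_{v_{i_{d_i}}}.
\]
By the definition of $M_w$ in Section~\ref{dit_prelim} applied to the concatenation $w_i = v_{i_1} v_{i_2} \cdots v_{i_{d_i}}$, the product on the right equals $M_{w_i}$. (This requires noting that $M_{uv} = M_u M_v$ for any two binary strings $u,v$, which follows directly from the inductive definition of $\delta_w$ and a one-step unwinding of matrix multiplication; strictly speaking this should be stated as a tiny lemma but it is immediate from Equation~\ref{dit_prelim_eqn}.) Hence $M_{out} = \sum_{i=1}^t c_i M_{w_i}$.

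Finally, I would evaluate the $(q_0,q_f)$-entry. By Equation~\ref{dit_prelim_eqn}, $M_{w_i}(q_0,q_f) = 1$ if $\delta_{w_i}(q_0) = q_f$, i.e.\ if $A$ accepts $w_i$, and is $0$ otherwise. Hence
\[
M_{out}(q_0,q_f) \;=\; \sum_{i=1}^{t} c_i \, M_{w_i}(q_0,q_f) \;=\; \sum_{i \in S_A^f} c_i,
\]
as desired. There is no real obstacle here: the only substantive point is recognizing that the substitution $x_i \mapsto M_{v_i}$ yields an algebra homomorphism (so $M_{out}$ is intrinsic to $f$), and the rest is bookkeeping via Equation~\ref{dit_prelim_eqn}.
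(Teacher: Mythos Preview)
Your proposal is correct and follows essentially the same approach as the paper: the paper's proof likewise observes that $M_{out} = f(M_{v_1},\ldots,M_{v_n}) = \sum_i c_i M_{w_i}$ and then reads off the $(q_0,q_f)$-entry via Equation~\ref{dit_prelim_eqn}. Your version is simply more explicit about why $M_{out}$ depends only on $f$ (the algebra-homomorphism framing) and about the concatenation identity $M_{uv}=M_uM_v$, both of which the paper leaves implicit.
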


\begin{proof}
  The proof is an easy consequence of the definitions and the
  properties of the matrices $M_w$ stated in Section \ref{dit_prelim}.
  Note that $M_{out} = f(M_{v_1},\cdots,M_{v_n})$. But 
  $f(M_{v_1},\cdots,M_{v_n}) = \sum_{i=1}^{s}c_i M_{w_i}$, where $w_i
  = v_{i_1}\cdots v_{i_{d_i}}$ is the binary string representing
  monomial $m_i$. By Equation \ref{dit_prelim_eqn}, we know that
  $M_{w_i}(q_0,q_f)$ is $1$ if $w_i$ is accepted by $A$, and $0$
  otherwise. Adding up, we obtain the result.
\end{proof}

We now explain the role of the automaton $A$ in testing if the
polynomial $f$ computed by $C$ is identically zero or not. Our basic
idea is to try and design an automaton $A$ that accepts exactly one
word from among all the words that correspond to the non-zero terms in
$f$. This would ensure that $M_{out}(q_0,q_f)$ is the non-zero
coefficient of the monomial filtered out. More precisely, we will use
the above theorem primarily in the following form, which we state as a
corollary.

\begin{corollary}
Given any arithmetic circuit $C$ computing polynomial $f\in
\F\{x_1,\cdots,x_n\}$ and any finite automaton $A =
(Q,\delta,q_0,q_f)$, then the output $M_{out}$ of $C$ on $A$ satisfies:
\label{autocor_use}
\begin{itemize}
\item[(1)] If $A$ rejects every string corresponding to a monomial in
$f$, then $M_{out}(q_0,q_f) = 0$.  
\item[(2)] If $A$ accepts exactly one string corresponding to a
monomial in $f$, then $M_{out}(q_0,q_f)$ is the nonzero coefficient of
that monomial in $f$.
\end{itemize}
Moreover, $M_{out}$ can be computed in time $\poly(|C|,|A|,n)$.
\end{corollary}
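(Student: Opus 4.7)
The plan is to observe that both parts of the corollary are immediate specializations of Theorem~\ref{autothm}, followed by a short runtime argument. Recall that the theorem gives the exact identity $M_{out}(q_0,q_f) = \sum_{i \in S_A^f} c_i$, where $S_A^f$ indexes the monomials of $f$ whose binary encodings are accepted by $A$, and each $c_i$ is the nonzero coefficient of the corresponding monomial $m_i$ in $f$. So the only real work is to interpret $S_A^f$ in the two hypothesized cases.

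For part~(1), the hypothesis that $A$ rejects every string corresponding to a monomial in $f$ says exactly that $S_A^f = \emptyset$, and so the empty sum evaluates to $0$, giving $M_{out}(q_0,q_f) = 0$. For part~(2), the hypothesis that $A$ accepts exactly one such string says $|S_A^f| = 1$; writing $S_A^f = \{i_0\}$, Theorem~\ref{autothm} yields $M_{out}(q_0,q_f) = c_{i_0}$, which by our convention in defining $f = \sum_{i=1}^{t} c_i m_i$ (with all $c_i \in \F \setminus \{0\}$) is the nonzero coefficient of the unique accepted monomial. Nothing here is subtle beyond quoting the theorem.

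For the runtime claim, I would argue as follows. Given the automaton $A = (Q,\delta,q_0,q_f)$, each matrix $M_{v_i}$ is an iterated product $M_{v_i} = M_0 M_1^{i} M_0$ of $i+2 = O(n)$ zero-one matrices of dimension $|Q| \times |Q|$, each of which is immediately readable from the transition function $\delta$. Hence all $n$ matrices $M_{v_1}, \dots, M_{v_n}$ can be written down in time $\poly(|A|, n)$. Then $M_{out} = f(M_{v_1}, \dots, M_{v_n})$ is computed by a bottom-up pass through the circuit $C$, performing one $|Q| \times |Q|$ matrix addition or multiplication at each gate; this costs $O(|C| \cdot |Q|^3)$ field operations, hence $\poly(|C|, |A|, n)$ time overall.

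There is no real obstacle: the content of the corollary is entirely contained in Theorem~\ref{autothm}, and I am only rephrasing it in the two cases ($|S_A^f| = 0$ and $|S_A^f| = 1$) that will be useful for the identity-testing and reconstruction algorithms in the sequel. If anything, the only item worth being careful about is making sure to invoke that the coefficients $c_i$ are by definition nonzero, so that in part~(2) the displayed value $M_{out}(q_0,q_f) = c_{i_0}$ is indeed a nonzero element of $\F$ and can be used as a witness of non-triviality of $f$.
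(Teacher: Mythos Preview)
Your proposal is correct and follows exactly the paper's approach: both parts~(1) and~(2) are read off directly from the identity $M_{out}(q_0,q_f) = \sum_{i\in S_A^f} c_i$ of Theorem~\ref{autothm}, and the runtime bound comes from evaluating $C$ gate-by-gate on the $|Q|\times|Q|$ matrices $M_{v_i}$. In fact you have supplied more detail than the paper, which simply asserts that both points are immediate consequences of the theorem and that the complexity bound follows from the definition of $M_{out}$.
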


\begin{proof}
Both points ($1$) and ($2$) are immediate consequences of the above
theorem. The complexity of computing $M_{out}$ easily follows from its
definition.
\end{proof}\\
Another interesting corollary to the above theorem is the following.

\begin{corollary}\label{autocor_pick}
  Given any arithmetic circuit $C$ over $\F\{x_1,\cdots,x_n\}$, and
  any monomial $m$ of degree $d_m$, we can compute the coefficient of
  $m$ in $C$ in time $\poly(|C|,d_m,n)$.
\end{corollary}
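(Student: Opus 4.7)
The plan is to construct, for the specific monomial $m$, a finite automaton $A_m$ whose unique accepting string is the binary encoding of $m$, and then invoke Corollary \ref{autocor_use}. Since such an $A_m$ accepts at most one of the binary strings corresponding to monomials of $f$ (namely, the one for $m$, if $m$ appears in $f$), the value $M_{out}(q_0,q_f)$ is forced to equal the coefficient of $m$ in $f$.

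More concretely, I would first form the binary string $w_m = v_{j_1}\cdots v_{j_{d_m}}$ encoding $m=x_{j_1}\cdots x_{j_{d_m}}$ using the encoding $v_i = 01^i 0$ of Section \ref{dit_output}; note $|w_m|\le d_m(n+2)$. Then I would build the standard deterministic string-matching automaton $A_m = (Q,\delta,q_0,q_f)$ with $Q = \{q_0,q_1,\ldots,q_{|w_m|},q_{\text{dead}}\}$, where the transitions spell out $w_m$ along $q_0\to q_1\to\cdots\to q_{|w_m|}=q_f$ and every other transition goes to the trap state $q_{\text{dead}}$. This automaton has $|A_m| = O(d_m n)$ states and accepts exactly the singleton language $\{w_m\}$; it is constructible in time $\poly(d_m, n)$.

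Now I would apply Corollary \ref{autocor_use} to the circuit $C$ and automaton $A_m$: if $m$ does not occur in $f$, then $A_m$ rejects every string corresponding to a monomial of $f$ and hence $M_{out}(q_0,q_f)=0$, which is indeed the coefficient of $m$; if $m$ occurs in $f$ with coefficient $c$, then $A_m$ accepts exactly the string $w_m$ among those corresponding to monomials in $f$, so $M_{out}(q_0,q_f)=c$. In both cases the $(q_0,q_f)$-entry of $M_{out}$ is precisely the coefficient of $m$ in $f$. Finally, the corollary also bounds the running time of computing $M_{out}$ by $\poly(|C|,|A_m|,n) = \poly(|C|,d_m,n)$, giving the required complexity.

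There is no real obstacle here: the construction of the singleton-accepting automaton is textbook, and everything substantive has already been done in Theorem \ref{autothm} and Corollary \ref{autocor_use}. The only mild care needed is to confirm that $A_m$ rejects every string other than $w_m$ (including all proper prefixes, proper extensions, and unrelated strings), which is immediate from the linear-chain-plus-trap structure.
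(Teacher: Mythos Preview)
Your proposal is correct and follows essentially the same approach as the paper: build a standard automaton accepting exactly the string encoding $m$ (of size $O(nd_m)$) and apply Corollary~\ref{autocor_use}. You simply spell out more of the details (the explicit chain-plus-trap construction and the two cases) than the paper's one-line proof does.
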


\begin{proof}
  Apply Corollary \ref{autocor_use} with $A$ being any standard
  automaton that accepts the string corresponding to monomial $m$ and
  rejects every other string. Clearly, $A$ can be chosen so that $A$
  has a unique accepting state and $|A| = O(nd_m)$.
\end{proof}

\begin{remark}
  Observe that Corollary~\ref{autocor_pick} is highly unlikely to hold
  in the commutative setting $\F[x_1,\cdots,x_n]$. For, in the
  commutative case, computing the coefficient of the monomial
  $x_1\cdots x_n$ in even an arbitrary product of linear forms
  $\Pi_{i}\ell_i$ is at least as hard as the permanent problem over
  $\F$, which is \mbox{$\#$\rm P}-complete when $\F=\mathbb{Q}$.
\end{remark}

\begin{remark}
  Corollary \ref{autocor_use} can also be used to give an independent
  proof of a weaker form of the result of Amitsur and Levitzki that is
  stated in Lemma \ref{AL-1}. In particular, it is easy to see that
  the algebra $M_d(\F)$ of $d\times d$ matrices over the field $\F$
  does not satisfy any nontrivial identity of degree $<d$. To prove
  this, we will consider noncommuting monomials as strings directly
  over the $n$ letter alphabet $\{x_1,\cdots,x_n\}$. Suppose
  $f=\sum_{i=1}^t c_im_i \in\F\{x_1,\cdots,x_n\}$ is a nonzero
  polynomial of degree $<d$.  Clearly, we can construct an automaton
  $B$ over the alphabet $\{x_1,\cdots,x_n\}$ that accepts exactly one
  string, namely one nonzero monomial, say $m_{i_0}$, of $f$ and
  rejects all the other strings over $\{x_1,\cdots,x_n\}$.  Also,
  $B$ can be constructed with at most $d$ states. Now, consider the
  output $M_{out}$ of any circuit computing $f$ on $B$. By Corollary
  \ref{autocor_use} the output matrix is non-zero, and this proves the
  result.
\end{remark}

\subsection{Construction of finite automata}\label{dit_auto_crt}

We begin with a useful definition.

\begin{definition}\label{isolate}{\hfill{~}}
  Let $W$ be a finite set of binary strings and $\mathcal{A}$ be a
  finite family of finite automata over the binary alphabet $\{0,1\}$.
\begin{itemize}
\item We say that $\mathcal{A}$ is \emph{isolating} for $W$ if there
  exists a string $w\in W$ and an automaton $A\in\mathcal{A}$ such
  that $A$ accepts $w$ and rejects all $w'\in W\setminus\{w\}$.
\item We say that $\mathcal{A}$ is an \emph{$(m,s)$-isolating family}
  if for every subset $W=\{w_1,\cdots,w_s\}$ of $s$ many binary
  strings, each of length at most $m$, there is a $A\in\mathcal{A}$ 
  such that $A$ is isolating for $W$. 
\end{itemize}
\end{definition}

Fix parameters $m,s\in\mathbb{N}$. Our first aim is to construct an
$(m,s)$ isolating family of automata $\mathcal{A}$, where both
$|\mathcal{A}|$ and the size of each automaton in $\mathcal{A}$ is
polynomially bounded in size. Then, combined with Corollary
\ref{autocor_use} we will be able to obtain deterministic identity
testing and interpolation algorithms in the sequel. 

Recall that we only deal with finite automata that have unique
accepting states. In what follows, for a string $w\in\{0,1\}^*$, we
denote by $n_w$ the positive integer represented by the binary numeral
$1w$. For each prime $p$ and each integer $i\in\{0,\cdots,p-1\}$, we
can easily construct an automaton $A_{p,i}$ that accepts exactly those
$w$ such that $n_w\equiv i~(\mod~p)$. Moreover, $A_{p,i}$ can be
constructed so as to have $p$ states and exactly one final state.

Our collection of automata $\mathcal{A}$ is just the set of $A_{p,i}$
where $p$ runs over the first few polynomially many primes, and
$i\in\{0,\cdots,p-1\}$. Formally, let $N$ denote
$(m+2)\binom{s}{2}+1$; $\mathcal{A}$ is the collection of $A_{p,i}$,
where $p$ runs over the first $N$ primes and $i\in\{0,\cdots,p-1\}$.
Notice that, by the prime number theorem, all the primes chosen above
are bounded in value by $N^2$, which is clearly polynomial in $m$ and
$s$. Hence, $|\mathcal{A}| = \poly(m,s)$, and each $A\in\mathcal{A}$
is bounded in size by $\poly(m,s)$. In the following lemma we
show that $\mathcal{A}$ is an $(m,s)$-isolating automata family.

\begin{lemma}\label{automata_crt}
  The family of finite automata $\mathcal{A}$ defined as above
is an $(m,s)$-isolating automata family.
\end{lemma}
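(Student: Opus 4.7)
The plan is to exhibit, for any set $W = \{w_1,\ldots,w_s\}$ of binary strings of length at most $m$, some prime $p$ among our first $N$ primes such that the integers $n_{w_1},\ldots,n_{w_s}$ are pairwise distinct modulo $p$. Once such a $p$ is found, choosing any $j$ (say $j=1$) and setting $i = n_{w_1} \bmod p$ gives an automaton $A_{p,i} \in \mathcal{A}$ that accepts $w_1$ and rejects every other $w_k$, witnessing that $\mathcal{A}$ is isolating for $W$.

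First I would note that since each $n_{w_j}$ is the integer whose binary representation is $1w_j$, distinct strings $w_j \neq w_k$ give distinct positive integers $n_{w_j} \neq n_{w_k}$, and each $n_{w_j} < 2^{m+1}$. Consequently, for every pair $j < k$, the quantity $D_{jk} := |n_{w_j} - n_{w_k}|$ is a nonzero positive integer strictly less than $2^{m+1}$, and therefore has at most $m+1 \leq m+2$ distinct prime divisors.

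Next, I would call a prime $p$ \emph{bad for $W$} if $p$ divides $D_{jk}$ for some pair $j < k$ (equivalently, if $n_{w_j} \equiv n_{w_k} \pmod{p}$ for some distinct $j,k$). A union bound over the $\binom{s}{2}$ pairs shows that the total number of bad primes is at most $(m+2)\binom{s}{2}$. Since $N = (m+2)\binom{s}{2}+1$, the family $\mathcal{A}$ contains $A_{p,i}$ for some prime $p$ that is not bad for $W$. For this $p$, the residues $n_{w_1} \bmod p, \ldots, n_{w_s} \bmod p$ are all distinct, so taking $i = n_{w_1} \bmod p$ yields an automaton $A_{p,i}$ that accepts precisely $w_1$ among the strings of $W$. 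This is exactly the isolation condition in Definition~\ref{isolate}.

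I do not expect any serious obstacle; the argument is essentially a CRT-style counting estimate. The one place to be careful is the elementary bound on the number of distinct prime divisors of $D_{jk}$: since $D_{jk} < 2^{m+1}$, any product of distinct primes dividing $D_{jk}$ is at least $2^{\omega(D_{jk})}$, forcing $\omega(D_{jk}) \leq m+1$. The choice $m+2$ in the definition of $N$ then comfortably absorbs this bound and yields the claimed inequality.
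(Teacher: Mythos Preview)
Your proposal is correct and follows essentially the same approach as the paper. The paper phrases the count via the single product $P=\prod_{j\neq k}(n_{w_j}-n_{w_k})$ and bounds the number of its prime divisors by $\log|P|\leq (m+2)\binom{s}{2}$, whereas you do the equivalent per-pair union bound; since a prime divides $P$ iff it is bad in your sense, the two arguments are the same counting estimate in different packaging.
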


\begin{proof}
  Consider any set of $s$ binary strings $W$ of length at most $m$
  each. By the construction of $\mathcal{A}$, $A_{p,i}\in\mathcal{A}$
  isolates $W$ if and only if $p$ does not divide
  $n_{w_j}-n_{w_k}$ for some $j$ and all $k\neq j$, and $n_{w_j}\equiv
  i~(\mod~p)$. Clearly, if $p$ satisfies the first of these
  conditions, $i$ can easily be chosen so that the second condition is
  satisfied. We will show that there is some prime among the first $N$
  primes that does not divide $P = \prod_{j\neq k}(n_{w_j}-n_{w_k})$.
  This easily follows from the fact that the number of distinct prime
  divisors of $P$ is at most $\log{|P|}$, which is clearly bounded by
  $(m+2)\binom{s}{2} = N-1$.  This concludes the proof.
\end{proof}

We note that the above $(m,s)$-isolating family $\mathcal{A}$ can 
clearly be constructed in time $\poly(m,s)$.

\subsection{The identity testing algorithm}
\label{dit_algo_it}
We now describe the identity testing algorithm. Let $C$ be the input
circuit computing a polynomial $f$ over $\F\{x_1,\cdots,x_n\}$.
Let $t$ be an upper bound on the number of monomials in $f$, and $d$
be an upper bound on the degree of $f$. As in Section \ref{dit_output},
we represent monomials over $x_1,\cdots,x_n$ as binary strings. Every
monomial in $f$ is represented by a string of length at most $d(n+2)$.

Our algorithm proceeds as follows: Using the construction of Section
\ref{dit_auto_crt}, we compute a family $\mathcal{A}$ of automata such that
$\mathcal{A}$ is isolating for any set $W$ with at most $t$ strings 
of length at most
$d(n+2)$ each. For each $A\in\mathcal{A}$, the algorithm computes the
output $M_{out}$ of $C$ on $A$. If $M_{out} \neq 0$ for any $A$, then
the algorithm concludes that the polynomial computed by the input
circuit is not identically zero; otherwise, the algorithm declares
that the polynomial is identically zero.

The correctness of the above algorithm is almost immediate from
Corollary \ref{autocor_use}. If the polynomial is identically zero, it is
easy to see that the algorithm outputs the correct answer. If the
polynomial is nonzero, then by the construction of $\mathcal{A}$, we
know that there exists $A\in\mathcal{A}$ such that $A$ accepts
precisely one of the strings corresponding to the monomials in $f$.
Then, by Corollary \ref{autocor_use}, the output of $C$ on $A$ is nonzero.
Hence, the algorithm correctly deduces that the polynomial computed is
not identically zero.

As for the running time of the algorithm, it is easy to see that the
family of automata $\mathcal{A}$ can be constructed in time
$\poly(d,n,t)$. Also, the matrices $M_{v_i}$ for each $A$ (all of
which are of size $\poly(d,n,t)$) can be constructed in polynomial
time. Hence, the entire algorithm runs in time $\poly(|C|,d,n,t)$. We
have proved the following theorem:

\begin{theorem}\label{sparsetest}
Given any arithmetic circuit $C$ with the promise that $C$ computes a
polynomial $f\in \F\{x_1,\cdots,x_n\}$ of degree $d$ with at most $t$
monomials, we can check, in time $\poly(|C|,d,n,t)$, if $f$ is
identically zero. In particular, if $f$ is sparse and of polynomial
degree, then we have a deterministic polynomial time algorithm.
\end{theorem}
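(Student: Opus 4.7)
The plan is to package the algorithm and correctness argument already sketched in Section \ref{dit_algo_it} into a short formal proof, by instantiating the $(m,s)$-isolating family construction of Lemma \ref{automata_crt} with the right parameters and then invoking Corollary \ref{autocor_use} on each automaton in the family.

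First, I would fix the binary encoding: each variable $x_i$ is encoded as $v_i = 01^i0$ of length $n+2$, so that every monomial of degree at most $d$ in $\F\{x_1,\ldots,x_n\}$ corresponds uniquely to a binary string of length at most $m := d(n+2)$. Under the promise of the theorem, the set $W_f$ of binary strings associated with the nonzero monomials of $f$ has $|W_f| \le t$ and each string has length at most $m$. Next, I would apply Lemma \ref{automata_crt} with parameters $m = d(n+2)$ and $s = t$ to obtain an $(m,t)$-isolating family $\mathcal{A}$ of automata, with $|\mathcal{A}| = \poly(d,n,t)$ and each $A \in \mathcal{A}$ of size $\poly(d,n,t)$; this family is explicitly constructible in time $\poly(d,n,t)$.

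The algorithm is then: for each $A = (Q,\delta,q_0,q_f) \in \mathcal{A}$, form the matrices $M_{v_1},\ldots,M_{v_n} \in \F^{|Q|\times|Q|}$, compute the output $M_{out}$ of $C$ on $A$ by a gate-by-gate simulation, and declare $f \not\equiv 0$ iff $M_{out}(q_0,q_f) \ne 0$ for some $A \in \mathcal{A}$. Correctness has two directions. If $f \equiv 0$, then $M_{out}$ is the zero matrix for every $A$, since the value of $M_{out}$ depends only on $f$ and not on the structure of $C$ (Theorem \ref{autothm}), so part (1) of Corollary \ref{autocor_use} gives $M_{out}(q_0,q_f) = 0$. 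Conversely, if $f \not\equiv 0$, then by the isolating property applied to $W_f$ there is some $A^\star \in \mathcal{A}$ that accepts exactly one string of $W_f$, and part (2) of Corollary \ref{autocor_use} forces $M_{out}(q_0,q_f)$ to be the corresponding nonzero coefficient in $\F$, which is detected.

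Finally, for the running time, constructing $\mathcal{A}$ takes $\poly(d,n,t)$ time; for each $A \in \mathcal{A}$ the matrices $M_{v_i}$ have dimension $\poly(d,n,t)$ and can be built in the same time bound, and computing $M_{out}$ takes $\poly(|C|,|A|,n)$ by Corollary \ref{autocor_use}. Summing over the $\poly(d,n,t)$ automata in $\mathcal{A}$ yields an overall bound of $\poly(|C|,d,n,t)$, as claimed. The potential main obstacle — showing that a small isolating family exists and can be constructed explicitly — has already been handled by the prime-number-theorem argument in Lemma \ref{automata_crt}, so the remaining proof is really just a clean assembly of the pieces.
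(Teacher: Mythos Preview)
Your proposal is correct and follows exactly the paper's own argument: instantiate the $(m,s)$-isolating family of Lemma~\ref{automata_crt} with $m=d(n+2)$ and $s=t$, evaluate $C$ on each automaton, and appeal to Corollary~\ref{autocor_use} for correctness and the running-time bound. There is nothing missing or different.
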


In the case of arbitrary noncommutative arithmetic circuits,
\cite{BW05} gives a randomized exponential time algorithm for the
identity testing problem. Their algorithm is based on the
Amitsur-Levitzki theorem, which forces the identity test to randomly
assign exponential size matrices for the noncommuting variables since
the circuit could compute an exponential degree polynomial.  However,
notice that Theorem~\ref{sparsetest} gives a deterministic
exponential-time algorithm under the additional restriction that the
input circuit computes a polynomial with at most \emph{exponentially}
many monomials. In general, a polynomial of exponential degree can
have a double exponential number of terms.

\subsection{Interpolation of noncommutative polynomials}\label{dit_algo_inter}

We now describe an algorithm that efficiently computes the
noncommutative polynomial given by the input circuit. Let $C,f,t$ and
$d$ be as in Section \ref{dit_algo_it}. Let $W$ denote the set of all
strings corresponding to monomials with non-zero coefficients in $f$.
For all binary strings $w$, let $A_w$ denote any standard automaton
that accepts $w$ and rejects all other strings. For any automaton $A$
and string $w$, we let $[A]_w$ denote the automaton that accepts those
strings that are accepted by $A$ and in addition, contain $w$ as a
prefix. For a set of finite automata $\mathcal{A}$, let
$[\mathcal{A}]_w$ denote the set $\{[A]_w~|~A\in\mathcal{A}\}$. 

We now describe a subroutine {\tt Test} that takes as input an
arithmetic circuit $C$ and a set of finite automata $\mathcal{A}$ and
returns a field element $\alpha\in \F$. The subroutine {\tt Test} will 
have the following properties: 
\begin{itemize}
\item[(P1)]If $\cA$ is isolating for $W$, the set of strings corresponding to
monomials in $f$, then $\alpha\neq0$.  
\item[(P2)] In the special case when $|\cA|=1$, and the above holds,
then $\alpha$ is in fact the coefficient of the isolated monomial.  
\item[(P3)] If no $A\in\cA$ accepts any string in $W$, then
$\alpha=0$.  
\end{itemize}

We now give the easy description of {\tt Test($C$,$\cA$)}: 

For each $A\in\cA$, the subroutine ${\tt Test}$ computes the output
matrix $M_{out}^A$ of $C$ on $A$. If there is an $A\in\cA$ such that
$M_{out}^A(q_0^A,q_f^A)\neq 0$, then for the first such automaton
$A\in\cA$, ${\tt Test}$ returns the scalar
$\alpha=M_{out}^A(q_0^A,q_f^A)$. Here, notice that $q_0^A$, $q_f^A$
denote the initial and final states of the automaton $A$. If there is
no such automaton $A\in\cA$ is found, then the subroutine returns the
scalar $0$.

It follows directly from Corollary \ref{autocor_use} that {\tt Test}
has Properties (P1)-(P3). Furthermore, clearly {\tt Test} runs in time
$\poly(|C|,||\cA||)$, where $||\cA||$ denotes the sum of the sizes of
the automata in $\cA$.

Let $f\in\F\{x_1,\cdots,x_n\}$ denote the noncommuting polynomial
computed by the input circuit $C$. We now describe a recursive
prefix-search based algorithm ${\tt Interpolate}$ that takes as input
the circuit $C$ and a binary string $u$, and computes all those
monomials of $f$ (along with their coefficients) which contain $u$ as
a prefix when encoded as strings using our encoding $x_i\mapsto
v_i=01^i0$. Clearly, in order to obtain all monomials of $f$ with
their coefficients, it suffices to run this algorithm with
$u=\epsilon$, the empty string.

In what follows, let $\cA_0$ denote the $(m,s)$-isolating automata
family $\{A_{p,i}\}$ as constructed in Section \ref{dit_auto_crt} with
parameters $m=d(n+2)$ and $s=t$. As explained in Section
\ref{dit_auto_crt}, we can compute $\cA_0$ in time $\poly(d,n,t)$.

Suppose $f$ is the polynomial computed by the circuit $C$. We 
now describe the algorithm {\tt Interpolate($C$,$u$)} formally (Algorithm 1).  
\begin{algorithm}
\caption{The Interpolation algorithm}
\begin{algorithmic}[1]

\Procedure {\tt Interpolate}{$C$,$u$}
 \State $\alpha,\alpha',\alpha''\leftarrow 0$.
  \State $\alpha\leftarrow $ {\tt Test$(C,\{A_u\})$}\Comment{$A_u$ is 
  the standard automaton that accepts only $u$} 
   \If {$\alpha=0$}
     \State \textbf{Break}. \Comment{$u$ can not corresponds 
   to a monomial of $f$}
    \Else \State \textbf{Output} $(u,\alpha)$. \Comment{$u$ is the binary 
encoding of a monomial of $f$ with coefficient $\alpha$} 
    \EndIf

 Now the algorithm find all monomials (along with their
 coefficient) 
 
 containing $u0$ or $u1$ as prefix in the binary encoding.

    \If {$|u|=d(n+2)$}
      \State \textbf{Stop}. 
    \Else 
      \State $\alpha'\leftarrow${\tt Test$(C,[\cA_0]_{u0})$}, 
             $\alpha''\leftarrow${\tt Test$(C,[\cA_0]_{u1})$}. 

    \EndIf 
    
    \If {$\alpha'\neq 0$}
        \State {\tt Interpolate$(C,u0)$}. \Comment{There is some monomial in 
    $C$ extending $u0$}
    \EndIf    
   
    \If {$\alpha''\neq 0$}
        \State {\tt Interpolate$(C,u1)$}. \Comment{There is some monomial in 
    $C$ extending $u1$}
    \EndIf
\EndProcedure
\end{algorithmic}
\end{algorithm}

The correctness of this algorithm is clear from the correctness
of the {\tt Test} subroutine and Lemma \ref{automata_crt}.  To bound
the running time, note that the algorithm never calls ${\tt
Interpolate}$ on a string $u$ unless $u$ is the prefix of some string
corresponding to a monomial. Hence, the algorithm invokes ${\tt
  Interpolate}$ for at most $O(td(n+2))$ many prefixes $u$. Since
$||[\cA_0]_{u0}||$ and $|A_u|$ are both bounded by $\poly(d,n,t)$ for
all prefixes $u$, it follows that the running time of the algorithm is
$\poly(|C|,d,n,t)$. We summarize this discussion in the following
theorem.

\begin{theorem}\label{dit_main}
  Given any arithmetic circuit $C$ computing a polynomial $f\in
  \F\{x_1,\cdots,x_n\}$ of degree at most $d$ and with at most $t$
  monomials, we can compute all the monomials of $f$, and their
  coefficients, in time $\poly(|C|,d,n,t)$. In particular, if $C$
  computes a sparse polynomial $f$ of polynomial degree, then $f$ can
  be reconstructed in polynomial time.
\end{theorem}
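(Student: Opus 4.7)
The plan is to combine the \texttt{Test} subroutine (whose correctness follows from Corollary~\ref{autocor_use}) with a prefix-based search over the binary encodings of monomials, guided by the $(m,t)$-isolating family $\cA_0$ from Lemma~\ref{automata_crt} with $m=d(n+2)$.

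First I would verify properties (P1)--(P3) of {\tt Test}$(C,\cA)$. Properties (P1) and (P2) are immediate from part~(2) of Corollary~\ref{autocor_use}: if some $A\in\cA$ isolates a string $w\in W$, then the matrix entry $M_{out}^A(q_0^A,q_f^A)$ equals the nonzero coefficient of the monomial encoded by $w$, and this is what {\tt Test} returns. Property (P3) follows from part~(1): if no $A\in\cA$ accepts any string in $W$, every output entry $M_{out}^A(q_0^A,q_f^A)$ is $0$, so {\tt Test} returns $0$. The running time is $\poly(|C|,\|\cA\|)$ by the ``moreover'' clause.

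Next I would analyze {\tt Interpolate}$(C,u)$ in two parts. The initial call on the singleton family $\{A_u\}$ correctly detects whether $u$ itself encodes a monomial of $f$: if it does, $A_u$ isolates that string in $W$ and the returned scalar is its coefficient (by (P2)); otherwise $A_u$ accepts no string in $W$ and the call returns $0$ (by (P3)). For the recursive step, the main claim to establish is that {\tt Test}$(C,[\cA_0]_{u0})$ is nonzero if and only if some monomial of $f$ has $u0$ as a prefix (and similarly for $u1$). Let $W_{u0}=\{w\in W : u0 \text{ is a prefix of }w\}$. If $W_{u0}\neq\emptyset$, apply Lemma~\ref{automata_crt} to $W_{u0}$ (legitimate since $|W_{u0}|\le t$ and strings have length $\le m$) to obtain $A\in\cA_0$ and $w\in W_{u0}$ with $A$ accepting $w$ and rejecting $W_{u0}\setminus\{w\}$; then $[A]_{u0}$ accepts $w$ and rejects everything else in $W$ (strings in $W\setminus W_{u0}$ fail the prefix condition), so $[\cA_0]_{u0}$ is isolating for $W$ and (P1) gives a nonzero return. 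Conversely, if $W_{u0}=\emptyset$, no automaton in $[\cA_0]_{u0}$ accepts any string of $W$, and (P3) returns $0$.

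Finally I would bound the running time. The recursion descends from $u$ to $u0$ (resp.\ $u1$) only when some monomial-string of $f$ extends $u0$ (resp.\ $u1$), so the set of visited prefixes lies in the prefix-trie of the at most $t$ monomial-strings, which has at most $O(td(n+2))$ nodes. Each invocation performs $O(1)$ calls to {\tt Test}, and $\|[\cA_0]_{u0}\|,\,|A_u|=\poly(d,n,t)$, giving total time $\poly(|C|,d,n,t)$. The one conceptually nontrivial step is the prefix transfer argument: showing that restricting every automaton in $\cA_0$ to also require the prefix $u0$ preserves the isolating property on the relevant subset $W_{u0}$ of $W$; everything else is bookkeeping built on top of the identity test of Section~\ref{dit_algo_it}.
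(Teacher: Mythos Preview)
Your proposal is correct and follows essentially the same approach as the paper: a prefix-based search driven by the {\tt Test} subroutine, with correctness resting on Corollary~\ref{autocor_use} and the isolating family from Lemma~\ref{automata_crt}. In fact you make explicit the one point the paper leaves to the reader, namely the ``prefix transfer'' argument that $[\cA_0]_{u0}$ is isolating for all of $W$ whenever $W_{u0}\neq\emptyset$; the paper simply asserts that correctness ``is clear from the correctness of the {\tt Test} subroutine and Lemma~\ref{automata_crt}.''
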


\section{Interpolation of Algebraic Branching Programs over
  noncommuting variables}

In this section, we study the interpolation problem for black-box
Algebraic Branching Programs (ABP) computing a polynomial in the
noncommutative ring $\mathbb{F}\{x_1,\cdots,x_n\}$. We are given as
input an ABP (defined below) $P$ in the black-box setting, and our
task is to output an ABP $P'$ that computes the same polynomial as
$P$. To make the task feasible in the black-box setting, we assume
that we are allowed to evaluate $P$ at any of its intermediate gates.

We first observe that all the results in Section \ref{dit} hold under
the assumption that the input polynomial $f$ is allowed only
\emph{black-box access}. In the noncommutative setting, we shall
assume that the black-box access allows the polynomial to be evaluated
for input values from an arbitrary matrix algebra over the base field
$\F$. It is implicit here that the cost of evaluation is polynomial in
the dimension of the matrices. Note that this is a reasonable
noncommutative black-box model, because if we can evaluate $f$ only
over $\F$ or any commutative extension of $\F$, then we cannot
distinguish the non-commutative polynomial represented by $f$ from the
corresponding commutative polynomial. We state the black-box version
of our results below.

\begin{theorem}[Similar to Theorem \ref{autothm}]
\label{autothm_bb}
Given black-box access to any polynomial $f = \sum_{i=1}^{t}c_im_i\in
\F\{x_1,\cdots,x_n\}$ and any finite automaton $A =
(Q,\delta,q_0,q_f)$, then the output $M_{out}$ of $f$ on $A$ is such
that $M_{out}(q_0,q_f) = \sum_{i\in S_A^f}c_i$, where $S_A^f =
\{i~|~1\leq i\leq t \textrm{ and $A$ accepts the string $w_i$
corresponding to $m_i$}\}$
\end{theorem}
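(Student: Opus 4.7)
The plan is to mirror the proof of Theorem \ref{autothm} almost verbatim, observing that the only property of the circuit $C$ that was actually used in that proof was the value $f(M_{v_1},\ldots,M_{v_n})$ of the polynomial at the matrix inputs. In the black-box setting, we first define the output $M_{out}$ of $f$ on $A$ to be the matrix obtained by querying the black box on inputs $(M_{v_1},\ldots,M_{v_n})$, which is legitimate since the assumed black-box model permits evaluation at matrices over $\F$ and the matrices $M_{v_i}$ have dimension $|Q|\times|Q|$.

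First, I would observe that by noncommutative polynomial evaluation,
\[
M_{out} \;=\; f(M_{v_1},\ldots,M_{v_n}) \;=\; \sum_{i=1}^{t} c_i\, M_{v_{i_1}}M_{v_{i_2}}\cdots M_{v_{i_{d_i}}} \;=\; \sum_{i=1}^{t} c_i\, M_{w_i},
\]
where $w_i = v_{i_1}v_{i_2}\cdots v_{i_{d_i}}$ is the binary string encoding monomial $m_i$ under the fixed encoding $x_j\mapsto v_j = 01^j0$. The second equality uses the multiplicative property $M_w = M_{w_1}M_{w_2}\cdots M_{w_k}$ from Section \ref{dit_prelim}, and the additive step is just linearity of polynomial evaluation.

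Next, I would take the $(q_0,q_f)$ entry on both sides. By Equation \ref{dit_prelim_eqn}, $M_{w_i}(q_0,q_f) = 1$ if $\delta_{w_i}(q_0) = q_f$, i.e.\ if $A$ accepts $w_i$, and is $0$ otherwise. Hence
\[
M_{out}(q_0,q_f) \;=\; \sum_{i=1}^{t} c_i\, M_{w_i}(q_0,q_f) \;=\; \sum_{i\in S_A^f} c_i,
\]
which is exactly the claimed identity.

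The only subtlety, and the one that needs a brief comment rather than a calculation, is to justify that the black-box access is strong enough to make $M_{out}$ well-defined: specifically, that a single matrix-valued query at $(M_{v_1},\ldots,M_{v_n})$ returns the honest evaluation $f(M_{v_1},\ldots,M_{v_n})$ in the free algebra $\F\{x_1,\ldots,x_n\}$. Once this is granted, no part of the original argument used any circuit-specific structure, so the proof transfers with no further work.
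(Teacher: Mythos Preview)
Your proof is correct and mirrors the paper's approach exactly: the paper does not give a separate proof for this black-box version but simply declares it ``Similar to Theorem \ref{autothm}'' and defines the output of $f$ on $A$ analogously, which is precisely what you spell out. The only ingredient is the expansion $M_{out}=f(M_{v_1},\ldots,M_{v_n})=\sum_i c_i M_{w_i}$ followed by reading off the $(q_0,q_f)$ entry via Equation \ref{dit_prelim_eqn}, and that is the whole argument in both cases.
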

Here the output of polynomial $f$ on $A$ is defined analogously to the
output of a circuit on $A$ in Section \ref{dit_output}.

\begin{corollary}[Similar to Corollary \ref{autocor_pick}]
\label{autocor_pick_bb}
Given black-box access to a polynomial $f$ in $\F\{x_1,\cdots,x_n\}$,
and any monomial $m$ of degree $d_m$, we can compute the coefficient
of $m$ in $f$ in time $\poly(d_m,n)$.
\end{corollary}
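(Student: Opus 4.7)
The plan is to mirror the proof of Corollary \ref{autocor_pick} but replace the use of Corollary \ref{autocor_use} (which assumes circuit access) with the black-box statement in Theorem \ref{autothm_bb}. Given the target monomial $m = x_{j_1}\cdots x_{j_{d_m}}$, I would first encode it as the binary string $w_m = v_{j_1}\cdots v_{j_{d_m}}$ via the encoding $x_i \mapsto v_i = 01^i0$ used in Section \ref{dit_output}. Note that $|w_m| = O(n d_m)$.

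Next, I would construct a standard deterministic automaton $A_m = (Q,\delta,q_0,q_f)$ over the alphabet $\{0,1\}$ that accepts exactly the single string $w_m$: a path of $|w_m|+1$ states for the correct prefixes, a dead state for any deviation, and a unique accepting state at the end of the path. This gives $|Q| = O(n d_m)$ with a unique final state, satisfying the standing assumption on automata in Section \ref{dit_prelim}. The matrices $M_{v_1},\ldots,M_{v_n} \in \F^{|Q|\times|Q|}$ associated with $A_m$ can then be built explicitly in time $\poly(n,d_m)$.

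The final step is to apply Theorem \ref{autothm_bb}. In the black-box model we are allowed to query $f$ on matrix inputs from an algebra over $\F$ at a cost polynomial in the dimension, so evaluating $M_{\mathrm{out}} = f(M_{v_1},\ldots,M_{v_n})$ costs $\poly(n,d_m)$. By Theorem \ref{autothm_bb}, $M_{\mathrm{out}}(q_0,q_f) = \sum_{i \in S_{A_m}^f} c_i$; but $A_m$ accepts a single string, namely $w_m$, so $S_{A_m}^f$ is either empty (if $m$ does not occur in $f$) or the singleton indexing $m$. In either case $M_{\mathrm{out}}(q_0,q_f)$ equals the coefficient of $m$ in $f$ (taken to be $0$ if $m$ does not appear). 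Reading off this entry yields the desired coefficient, and the total running time is $\poly(d_m,n)$.

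There is essentially no obstacle beyond verifying that the black-box model genuinely permits matrix-valued evaluations at dimension $\poly(n,d_m)$; this is exactly the assumption articulated in the paragraph preceding Theorem \ref{autothm_bb}, so the proof reduces to the routine construction of $A_m$ and a single black-box query.
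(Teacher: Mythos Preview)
Your proposal is correct and follows essentially the same approach as the paper: the paper's proof of Corollary~\ref{autocor_pick} applies Corollary~\ref{autocor_use} with a standard automaton accepting only the string encoding $m$ (of size $O(nd_m)$), and the black-box version simply replaces that appeal with Theorem~\ref{autothm_bb}, exactly as you do.
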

Finally we have,

\begin{theorem}[Similar to Theorem \ref{dit_main}]
  Given black-box access to a polynomial $f$ in $\F\{x_1,\cdots,x_n\}$
  of degree at most $d$ and with at most $t$ monomials, we can compute
  all the monomials of $f$, and their coefficients, in time
  $\poly(d,n,t)$. In particular, if $f$ is a sparse polynomial of
  polynomial degree, then it can be reconstructed in polynomial time.
\end{theorem}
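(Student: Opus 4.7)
The plan is to port the interpolation algorithm of Section~\ref{dit_algo_inter} verbatim to the black-box setting, replacing each circuit evaluation by a black-box query. The only place where the circuit $C$ was used in that section was inside the subroutine {\tt Test}: given an automaton $A$, {\tt Test} computes the output $M_{out}^A$ of $C$ on $A$ by evaluating $C$ at the matrices $M_{v_1},\ldots,M_{v_n}\in\F^{|Q|\times|Q|}$ defined from the transition function of $A$. In the present black-box model, by assumption, we can feed these same matrices to the oracle for $f$ and read off $M_{out}^A = f(M_{v_1},\ldots,M_{v_n})$ in time polynomial in $|A|$. Hence the entire {\tt Test} routine is available black-box, with running time $\poly(n,||\cA||)$ per call.

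Next I would invoke Theorem~\ref{autothm_bb} in place of Theorem~\ref{autothm}: this guarantees that properties (P1)--(P3) of {\tt Test} continue to hold when $C$ is replaced by black-box access to $f$, because those properties were derived purely from the identity $M_{out}(q_0,q_f)=\sum_{i\in S_A^f}c_i$, which holds in the black-box setting as well. With {\tt Test} in hand, I would run {\tt Interpolate}$(f,\epsilon)$ exactly as in Algorithm~1, with the only change being that the arithmetic circuit argument is replaced by the black-box oracle, and all internal calls {\tt Test}$(C,\cdot)$ become {\tt Test}$(f,\cdot)$. Correctness is inherited: by Lemma~\ref{automata_crt}, the family $\cA_0$ with parameters $m=d(n+2)$ and $s=t$ is $(m,s)$-isolating, so for every prefix $u$ that extends to some monomial of $f$, some $A\in[\cA_0]_{u0}$ (respectively $[\cA_0]_{u1}$) isolates a string in $W$ whose prefix is $u0$ (respectively $u1$), causing the recursion to explore exactly the relevant branches of the prefix tree, and no others.

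For the running time, I would use the same accounting as in Section~\ref{dit_algo_inter}: the recursion visits at most $O(t\cdot d(n+2))$ prefixes $u$ because every such prefix must extend to one of the at most $t$ monomial strings, each of length at most $d(n+2)$. At each prefix the algorithm makes $O(|\cA_0|)=\poly(d,n,t)$ calls to {\tt Test}, and each call costs $\poly(n,d,t)$ black-box queries on matrices of dimension $\poly(d,n,t)$. Multiplying these bounds yields an overall running time of $\poly(d,n,t)$, as claimed, and the algorithm outputs every $(u,\alpha)$ with $\alpha$ the coefficient of the monomial encoded by $u$, giving a complete description of $f$.

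I do not anticipate a genuine obstacle here; the main thing to verify carefully is that the black-box model really does suffice for {\tt Test}, i.e.\ that we are permitted to evaluate $f$ on matrices over $\F$ of the polynomial dimension $|Q|$ required by the automata in $\cA_0$. This is precisely the noncommutative black-box access postulated in the paragraph preceding Theorem~\ref{autothm_bb}, so the proof reduces to stringing together Theorem~\ref{autothm_bb}, Corollary~\ref{autocor_pick_bb}, and Lemma~\ref{automata_crt} in the manner described above.
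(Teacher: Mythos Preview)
Your proposal is correct and follows exactly the paper's approach: the paper does not give a separate proof for this theorem but simply observes that all results of Section~\ref{dit} hold in the black-box model because the circuit $C$ is used only for evaluation on matrix inputs, which black-box access supplies. Your write-up is a faithful and careful expansion of that observation.
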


Our interpolation algorithm for noncommutative ABPs is motivated by
Raz and Shpilka's \cite{RS05} algorithm for identity testing of ABPs
over noncommuting variables. Our algorithm interpolates the given ABP
layer by layer using ideas developed in Section \ref{dit} (principally
Corollary \ref{autocor_pick_bb}).

\begin{definition}\cite{N91, RS05}
An Algebraic Branching Program (ABP) is a directed acyclic graph with
one vertex of in-degree zero, called the source, and a vertex of
out-degree zero, called the sink. The vertices of the graph are
partitioned into levels numbered $0,1,\cdots,d$. Edges may only go
from level $i$ to level $i+1$ for $i\in\{0,\cdots,d-1\}$. The source
is the only vertex at level $0$ and the sink is the only vertex at
level $d$. Each edge is labeled with a homogeneous linear form in the
input variables. The size of the ABP is the number of vertices.
\end{definition}

Notice that an ABP with no edge between two vertices $u$ and $v$ on
levels $i$ and $i+1$ is equivalent to an ABP with an edge from $u$ to
$v$ labeled with the zero linear form. Thus, without loss of
generality, we assume that in the given ABP there is an edge between
every pair of vertices on adjacent levels.

As mentioned before, we will assume black-box access to the input ABP
$P$ where we can evaluate the polynomial computed by $P$ at any of its
gates over arbitrary matrix rings over $\mathbb{F}$. In order to
specify the gate at which we want the output, we index the gates of
$P$ with a layer number and a gate number (in the layer).

Based on \cite{RS05}, we now define a \emph{Raz-Shpilka basis} for
the level $i$ of the ABP. Let the number of nodes at the $i$-th level
be $G_i$ and let $\{p_1,p_2,\cdots,p_{G_i}\}$ be the polynomials
computed at the nodes. We will identify this set of polynomials with
the $G_i\times n^i$ matrix $M_i$ where the columns of $M_i$ are
indexed by $n^i$ different monomials of degree $i$, and the rows are
indexed by the polynomials $p_j$. The entries of the matrix $M_i$ are
the corresponding polynomial coefficients. A Raz Shpilka basis is a
set of at most $G_i$ linearly independent column vectors of $M_i$ that
generates the entire column space. Notice that every vector in the
basis is identified by a monomial.

In the algorithm we need to compute a Raz-Shpilka basis at every
level of the ABP. Notice that at the level $0$ it is trivial to
compute such a basis. Inductively assume we can compute such a basis
at the level $i$. Denote the basis by $B_i=\{v_1,v_2,\cdots,v_{k_i}\}$
where $v_j\in\F^{G_i}$, and $k_i\leq G_i$. Assume that the elements of
this basis corresponds to the monomials $\{m_1,m_2,\cdots,m_{k_i}\}$.
We compute a Raz Shpilka basis at the level $i+1$ by computing the
column vectors corresponding to the set of monomials
$\{m_jx_s\}_{j\in[k_i],s\in[n]}$ in $M_{i+1}$ and then extracting the
linear independent vectors out of them.  Computing these column
vectors requires the computation of the coefficients of these
monomials, which can be done in polynomial time using the
Corollary~\ref{autocor_pick_bb}. Notice that we also know the
monomials that the elements of this basis correspond to.

We now describe the interpolation algorithm formally. As mentioned
before, we will construct the output ABP $P'$ layer by layer such that
every gate of $P'$ computes the same polynomial as the corresponding
gate in $P$. Clearly, this task is trivial at level $0$. 

Assume that we have completed the construction up to level $i<d$. We
now construct level $i+1$. This only involves computation of the linear
forms between level $i$ and level $i+1$. Hence, there are $k_i\leq
G_i$ vectors in the Raz-Shpilka basis at the $i$th level. Let the
monomials corresponding to these vectors be $B =
\{m_1,\cdots,m_{k_i}\}$. Fix any gate $u$ at level $i+1$ in $P$, and
let $p_u$ be the polynomial compute at this gate in
$P$. Clearly,
\[p_u = \sum_{j=1}^{G_i}p_j \ell_j\]
where $p_j$ is the polynomial computed at the $j$th gate at level $i$,
and $\ell_j$ is the linear form labeling the edge between the $j$th
gate at level $i$ and $u$.

We have,
\begin{align*}
p_u &= \sum_{j=1}^{G_i}p_j \ell_j\\
    &=
    \sum_{j=1}^{G_i}\left(\sum_{m:|m|=i}c_m^{(j)}m\right)\left(\sum_{s=1}^{n}a_s^{(j)}x_s\right)\\
    &=
    \sum_{m:|m|=i,s}mx_s\left(\sum_{j=1}^{G_i}c_m^{(j)}a_s^{(j)}\right)\\
    &= \sum_{m:|m|=i,s}mx_s\angle{c_m,a_s}
\end{align*}
where $c_m$ and $a_s$ denote the vectors of field elements
$(c_m^{(j)})_j$ and $(a_s^{(j)})_j$ respectively. Note that $a_s$
denotes a vector of unknowns that we need to compute. Each monomial
$mx_s$ in the above equation gives us a linear constraint on $a_s$.
However, this system of constraints is exponential in size. To obtain
a feasible solution for $\{a_s\}_{s\in [n]}$, we observe that it is
sufficient to satisfy the constraints corresponding only to monomials
$mx_s$ where $m\in B$. All other constraints are simply linear
combinations of these and are thus automatically satisfied by any
solution to these.

Now, for $m\in B$ and $s\in\{1,\cdots,n\}$, we compute the
coefficients of $mx_s$ in $p_u$ and those of $m$ in each of the $p_i$'s
using the algorithm of Corollary \ref{autocor_pick_bb}. Hence, we have
all the linear constraints we need to solve for $\{a_s\}_{s\in [n]}$.
Firstly, note that such a solution exists, since the linear forms in
the black box ABP $P$ give us such a solution. Moreover, any solution
to this system of linear equations generates the same polynomial $p_u$
at gate $u$.  Hence, we can use any solution to this system of linear
equations as our linear forms. We perform this computation for all
gates $u$ at the $i+1$st level. The final step in the iteration is to
compute the Raz-Shpilka basis for the level $i+1$. 

We can use induction on the level numbers to argue correctness of the
algorithm. From the input black-box ABP $P$, for each level $k$, let
$P_{jk}, 1\leq j\leq G_k$ denote the algebraic branching programs
computed by $P$ with output gate as gate $j$ in level $k$.  Assume, as
induction hypothesis, that the algorithm has computed linear forms for
all levels upto level $i$ and, furthermore, that the algorithm has a
correct Raz-Shpilka basis for all levels upto level $i$. This gives us
a reconstructed ABP $P'$ upto level $i$ with the property, for $1\leq
k\leq i$, each ABP $P'_{jk}, 1\leq j\leq G_k$ computes the same
polynomials as the corresponding $P_{jk}, 1\leq j\leq G_k$, where
$P'_{jk}$ is obtained from $P'$ by designating gate $j$ at level $k$
as output gate. Under this induction hypothesis, it is clear that our
interpolation algorithm will compute a correct set of linear forms
between levels $i$ and $i+1$. Consequently, the algorithm will
correctly reconstruct an ABP $P'$ upto level $i+1$ along with a
corresponding Raz-Shpilka basis for that level.

We can now summarize the result in the following theorem.

\begin{theorem}
  Let $P$ be an ABP of size $s$ and depth $d$ over
  $\F\{x_1,x_2,\cdots,x_n\}$ given by black-box access that allows
  evaluation of any gate of $P$ for inputs $x_i$ chosen from a matrix
  algebra $M_k(\F)$ for a polynomially bounded value of $k$. Then in
  deterministic time $\poly(d,s,n)$, we can compute an ABP $P'$ such
  that $P'$ evaluates to the same polynomial as $P$.
\end{theorem}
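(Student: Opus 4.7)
The plan is to reconstruct $P'$ inductively, layer by layer, maintaining as invariant that at each processed level $i$: (a) every gate of $P'$ at level $i$ computes the same polynomial as the corresponding gate of $P$, and (b) I have in hand a Raz-Shpilka basis $B_i = \{m_1,\ldots,m_{k_i}\}$, i.e.\ a set of degree-$i$ monomials indexing a maximal linearly independent collection of columns of the coefficient matrix $M_i$. The base case $i=0$ is trivial since level $0$ holds only the source.

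For the inductive step, fix a gate $u$ at level $i+1$ and write
\[
p_u = \sum_{j=1}^{G_i} p_j \ell_j, \qquad \ell_j = \sum_{s=1}^n a_s^{(j)} x_s,
\]
where the $a_s^{(j)}$ are the unknowns I wish to determine. Expanding and regrouping,
\[
p_u = \sum_{m,\,s} m\, x_s\, \langle c_m,\, a_s\rangle,
\]
where $c_m \in \F^{G_i}$ records the coefficient of $m$ across $p_1,\ldots,p_{G_i}$ and $a_s = (a_s^{(j)})_j$. I equate coefficients of $m x_s$ for each $m \in B_i$ and each $s\in[n]$: using black-box access at the appropriate gates and Corollary~\ref{autocor_pick_bb}, both sides of these equations are evaluated in time $\poly(d,s,n)$, producing a polynomial-size linear system in the $\{a_s^{(j)}\}$. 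The true linear forms of $P$ satisfy this system, so it is consistent, and I solve it by Gaussian elimination.

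The key correctness point, and the place I would slow down to verify, is that restricting to basis-monomial constraints actually pins down $p_u$ on \emph{all} monomials, not just on $B_i \cdot \{x_1,\ldots,x_n\}$. This follows from the defining property of the Raz-Shpilka basis: for every degree-$i$ monomial $m'$ there exist scalars $\beta_{m',m}$ with $c_{m'} = \sum_{m \in B_i} \beta_{m',m}\, c_m$. Hence any two feasible solutions agree on $\langle c_{m'}, a_s\rangle$ for every $m'$ and $s$, so they induce the same polynomial at $u$. Thus any solution of the reduced system yields a correct set of incoming linear forms for $u$, and I repeat this for every gate at level $i+1$.

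Finally, I refresh the basis: the candidate monomials $\{m_j x_s : m_j\in B_i,\ s\in[n]\}$ span the column space of $M_{i+1}$ (by the same argument applied one level up), so computing their column vectors via Corollary~\ref{autocor_pick_bb} and extracting a maximal linearly independent subset by standard linear algebra gives $B_{i+1}$, preserving the invariant. Each level costs $\poly(d,s,n)$ and there are $d$ levels, yielding the total running time claimed in the theorem.
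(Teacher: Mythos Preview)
Your proposal is correct and follows essentially the same approach as the paper: both reconstruct $P'$ layer by layer, maintain a Raz--Shpilka basis at each level, set up the linear system $\langle c_m, a_s\rangle = [\text{coeff of } mx_s \text{ in } p_u]$ only for basis monomials $m\in B_i$, and use Corollary~\ref{autocor_pick_bb} to extract the needed coefficients. Your justification that any feasible solution yields the correct $p_u$ (via $c_{m'} = \sum_m \beta_{m',m}c_m$) is in fact slightly more explicit than the paper's, which simply asserts that the omitted constraints are linear combinations of the retained ones.
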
  

 \section{Noncommutative identity testing and circuit lower bounds}

 In Section~\ref{dit} we gave a new deterministic identity test for
 noncommuting polynomials which runs in polynomial time for sparse
 polynomials of polynomially bounded degree.

 However, the real problem of interest is identity testing for
 polynomials given by small degree noncommutative circuits for which
 Bogdanov and Wee \cite{BW05} give an efficient randomized test. When
 the noncommutative circuit is a formula, Raz and Shpilka \cite{RS05}
 have shown that the problem is in deterministic polynomial
 time. Their method uses ideas from Nisan's lower bound technique for
 noncommutative formulae \cite{N91}.

 How hard would it be to show that noncommutative PIT is in
 deterministic polynomial time for \emph{circuits} of polynomial
 degree? In the commutative case, Impagliazzo and Kabanets \cite{KI03}
 have shown that derandomizing PIT implies circuit lower bounds. It
 implies that either $\NEXP\not\subseteq \ppoly$ or the integer
 Permanent does not have polynomial-size arithmetic circuits.

 We observe that this result also holds in the noncommutative setting.
 I.e., if noncommutative PIT has a deterministic polynomial-time
 algorithm then either $\NEXP\not\subseteq \ppoly$ or the
 \emph{noncommutative} Permanent function does not have
 polynomial-size noncommutative circuits.

 As noted, in some cases noncommutative circuit lower bounds are
 easier to prove than for commutative circuits. Nisan \cite{N91} has
 shown exponential-size lower bounds for noncommutative formula size
 and further results are known for pure noncommutative circuits
 \cite{N91,RS05}. However, proving superpolynomial size lower bounds for
 general noncommutative circuits computing the Permanent has remained
 an open problem.

The noncommutative Permanent function $\Perm(x_1,\cdots,x_n)\in
R\{x_1,\cdots,x_n\}$ is defined as
\[
\Perm(x_1,\cdots,x_n)=\sum_{\sigma\in S_n}\prod_{i=1}^n x_{i,\sigma(i)},
\]
where the coefficient ring $R$ is any commutative ring with unity.
Specifically, for the next theorem we choose $R=\mathbb{Q}$.

\begin{theorem}
  If $\PIT$ for noncommutative circuits of polynomial degree
  $C(x_1,\cdots,x_n)\in\mathbb{Q}\{x_1,\cdots,x_n\}$ is in
  deterministic polynomial-time then either
  $\NEXP\not\subseteq\ppoly$ or the \emph{noncommutative} Permanent
  function does not have polynomial-size noncommutative circuits.
\end{theorem}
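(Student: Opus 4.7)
The plan is to transplant the Impagliazzo--Kabanets argument of \cite{KI03} to the noncommutative setting, since each ingredient of that argument has a faithful noncommutative analog. I proceed by contradiction: assume simultaneously that noncommutative $\PIT$ for polynomial-degree circuits is in $\P$, that $\NEXP\subseteq\ppoly$, and that the noncommutative Permanent has polynomial-size noncommutative circuits; the goal is to derive $\NEXP\subseteq\NP$, contradicting the nondeterministic time hierarchy theorem.

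Two ingredients carry over unchanged because they are purely Boolean-complexity statements. The Babai--Fortnow--Nisan--Wigderson / Impagliazzo--Kabanets--Wigderson line of results gives, under $\NEXP\subseteq\ppoly$, that $\NEXP=\MA$; combined with $\MA\subseteq\mathrm{PH}\subseteq\mathrm{EXP}\subseteq\NEXP$, this collapses to $\NEXP=\mathrm{EXP}=\mathrm{PH}$. Toda's theorem then gives $\mathrm{PH}\subseteq\P^{\#\P}$, so that $\NEXP\subseteq\P^{\#\P}$.

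The crucial new ingredient is the \emph{noncommutative} downward self-reduction
\[
\Perm_n(X)\;=\;\sum_{j=1}^{n} x_{1j}\,\Perm_{n-1}(X^{(j)}),
\]
where $X^{(j)}$ is the $(n{-}1)\times(n{-}1)$ minor obtained by deleting row $1$ and column $j$. Since the first-row variables $x_{1j}$ occupy the leftmost position of every monomial in $\Perm_n$, this is a genuine identity in the free ring $\mathbb{Q}\{x_{ij}\}$, \emph{not} merely an identity after commutativization. Hence, given a polynomial-size candidate noncommutative circuit $C_n$ for $\Perm_n$, we verify it inductively: first produce verified circuits $C_1,C_2,\ldots,C_{n-1}$ (the base case $n=1$ is trivial), then invoke the hypothesized noncommutative $\PIT$ algorithm on the polynomial-size, polynomial-degree noncommutative identity $C_n(X)-\sum_{j} x_{1j}\,C_{n-1}(X^{(j)})\equiv 0$. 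All $n$ tests complete in deterministic polynomial time.

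The remainder of the argument is now routine. An $\NP$ machine guesses the hypothesized polynomial-size noncommutative circuit for $\Perm_n$, verifies it as above, and then uses it as a $\#\P$-oracle: because the noncommutative Permanent specialized to any commutative inputs coincides with the standard ($\#\P$-complete) Permanent, this places $\P^{\#\P}\subseteq\NP$. Combined with $\NEXP\subseteq\P^{\#\P}$ we obtain $\NEXP\subseteq\NP$, contradicting the nondeterministic time hierarchy theorem. The main conceptual obstacle, resolved by the self-reduction observation, is to confirm that none of the verification steps actually require \emph{commutative} PIT: every polynomial identity arising is a noncommutative identity of polynomial degree, and therefore falls within the scope of the hypothesized deterministic noncommutative $\PIT$ algorithm.
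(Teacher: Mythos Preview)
Your proposal is correct and follows essentially the same route as the paper: invoke \cite{IKW02} to get $\NEXP=\MA$ under $\NEXP\subseteq\ppoly$, use Toda to land in $\P$ with a Permanent/$\#\P$ oracle, then have an $\NP$ machine guess the hypothesized noncommutative Permanent circuit, verify it via the noncommutative cofactor expansion $\Perm_n(X)=\sum_j x_{1j}\Perm_{n-1}(X^{(j)})$ using the assumed deterministic noncommutative $\PIT$, and derive $\NEXP=\NP$. The only cosmetic difference is that the paper phrases the oracle step as $\MA\subseteq\P^{\Perm_{\mathbb{Z}}}$ directly, while you route through $\mathrm{PH}\subseteq\P^{\#\P}$; and you are a bit more explicit that the verification requires (guessed) circuits $C_1,\ldots,C_{n-1}$ for all smaller sizes, which the paper leaves implicit in its phrase ``checking the $m$ noncommutative identities.''
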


\begin{proof}
  Suppose $\NEXP\subseteq\ppoly$.  Then, by the main result of
  \cite{IKW02} we have $\NEXP=\MA$.  Furthermore, by Toda's theorem
  $\MA\subseteq\P^{\Perm_{\mathbb{Z}}}$, where the oracle computes the
  integer permanent. Now, assuming $\PIT$ for noncommutative circuits
  of polynomial degree is in deterministic polynomial-time we will
  show that the (noncommutative) Permanent function does not have
  polynomial-size noncommutative circuits.  Suppose to the contrary
  that it does have polynomial-size noncommutative circuits. Clearly,
  we can use it to compute the integer permanent as well. Furthermore,
  as in \cite{KI03} we notice that the noncommutative $n\times n$
  Permanent is also uniquely characterized by the identities
  $p_1(x)\equiv x$ and $p_i(X)=\sum_{j=1}^i x_{1j} p_{i-1}(X_j)$ for
  $1<i \leq n$, where $X$ is a matrix of $i^2$ noncommuting variables
  and $X_j$ is its $j$-th minor w.r.t.\ the first row. I.e.\ if
  arbitrary polynomials $p_i, 1\leq i\leq n$ satisfies these $n$
  identities over \emph{noncommuting} variables $x_{ij}, 1\leq i,j\leq
  n$ if and only if $p_i$ computes the $i\times i$ permanent of
  noncommuting variables. The rest of the proof is exactly as in
  Impagliazzo-Kabanets \cite{KI03}. We can easily describe an NP machine
  to simulate a $\P^{\Perm_{\Z}}$ computation. The NP machine guesses
  a polynomial-size noncommutative circuit for $\Perm$ on $m\times m$
  matrices, where $m$ is a polynomial bound on the matrix size of the
  queries made. Then the NP verifies that the circuit computes the
  permanent by checking the $m$ \emph{noncommutative} identities it
  must satisfy. This can be done in deterministic polynomial time by
  assumption. Finally, the NP machines uses the circuit to answer all
  the integer permanent queries. Putting it together, we get
  $\NEXP=\NP$ which contradicts the nondeterministic time hierarchy
  theorem.
\end{proof}

\section{Schwartz-Zippel lemma over finite rings}\label{sz-lemma}

In this section we give a generalization of Schwartz-Zippel Lemma to
finite commutative rings and apply it for identity testing of
black-box polynomials in $R[x_1,\cdots,x_n]$, where $R$ is a finite
commutative ring with unity whose elements are uniformly encoded by
strings from $\{0,1\}^m$ with a special string $e$ denote unity, and
the ring operations are performed by a ring oracle.

We recall some facts about finite commutative rings \cite{B74,AM69}. A
commutative ring $R$ with unity is a \emph{local ring} if $R$ has a
\emph{unique} maximal ideal $M$. An element $r\in R$ is
\emph{nilpotent} if $r^n = 0$ for some positive integer $n$. An
element $r\in R$ is a \emph{unit} if it is invertible.  I.e.\ $rr'=1$
for some element $r'\in R$. Any element of a finite local ring is
either a nilpotent or a unit. An ideal $I$ is a \emph{prime ideal} of
R if $ab\in I$ implies either $a\in I$ or $b\in I$. For finite
commutative rings, prime ideals and maximal ideals coincide. These
facts considerably simplify the study of finite commutative rings (in
contrast to infinite rings).

The \emph{radical} of a finite ring $R$ denoted by $\Rad(R)$ is
defined as the set of all nilpotent elements, i.e 
\[ 
\Rad(R) = \{r\in R ~|~ \exists n>0 ~\mbox{s.t}~ r^n = 0\} 
\]

The radical $\Rad(R)$ is an ideal of $R$, and it is the unique maximum
ideal if $R$ is a local ring. Let $m$ denote the least positive
integer such that for every nilpotent $r\in R$, $r^m=0$, i.e
$(\Rad(R))^m = 0$. Let $R$ be any finite commutative ring with unity
and $\{P_1,P_2,\cdots,P_\ell\}$ by the set of all maximal ideals of
$R$. Let $R_i$ denote the quotient ring $R/P_i^m$ for $1\leq i\leq
\ell$. Then, it is easy to see that each $R_i$ is a local ring and
$P_i/P_i^m$ is the unique maximal ideal in $R_i$. We recall the
following structure theorem for finite commutative rings.

\begin{theorem}[\cite{B74}, Theorem VI.2, page 95]\label{struct-thm}
  Let $R$ be a finite commutative ring. Then $R$ decomposes (up to
  order of summands) uniquely as a direct sum of local rings. More
  precisely
\[ R \iso R_1 \oplus R_2 \oplus \cdots \oplus R_{\ell}, \] via the map
$\phi(r) = (r + P_1^m, r + P_2^m, \cdots, r + P_{\ell}^m)$, where
$R_i=R/P_i^m$ and $P_i, 1\leq i\leq \ell$ are all the maximal ideals of
$R$.  \end{theorem}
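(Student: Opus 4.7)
The plan is to deduce the theorem from the Chinese Remainder Theorem applied to the ideals $P_i^m$. The heart of the argument is to show that the $P_i^m$ are pairwise coprime and have trivial common intersection; CRT then produces the stated isomorphism via the map $\phi$, and locality of each summand together with uniqueness are routine.

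First I would establish pairwise coprimality. For distinct maximal ideals $P_i, P_j$ we have $P_i + P_j = R$, so pick $a\in P_i$, $b\in P_j$ with $a + b = 1$; expanding $1 = (a+b)^{2m-1}$ by the binomial theorem writes $1$ as an element of $P_i^m + P_j^m$, so $P_i^m + P_j^m = R$ as well.

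The second step, where the only real subtlety lies, is to show $\bigcap_{i=1}^{\ell} P_i^m = 0$. Using the standard fact that pairwise coprime ideals satisfy $\bigcap_k I_k = \prod_k I_k$, and that in a commutative ring $\prod_i P_i^m = (\prod_i P_i)^m$ by rearranging factors, I would reduce to bounding $(\prod_i P_i)^m$. Now in any finite commutative ring every prime ideal is maximal, so the nilradical coincides with the Jacobson radical: $\Rad(R) = \bigcap_i P_i$. Hence $\prod_i P_i \subseteq \bigcap_i P_i = \Rad(R)$, and raising to the $m$-th power gives $\prod_i P_i^m \subseteq \Rad(R)^m = 0$ by the choice of $m$ fixed earlier in the section.

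With coprimality and trivial intersection in hand, CRT yields $R \iso \prod_i R/P_i^m$ via the stated map $\phi(r) = (r + P_i^m)_i$. Each $R/P_i^m$ is local because its prime ideals correspond to primes of $R$ containing $P_i^m$; any such prime must contain $P_i$ (the radical of $P_i^m$), and being maximal must equal $P_i$, so $P_i/P_i^m$ is the unique maximal ideal. Uniqueness of the decomposition follows because in any expression $R \iso \bigoplus_j S_j$ as a finite direct sum of local rings, the projections identify the maximal ideals of the $S_j$ with the distinct maximal ideals of $R$, pinning down the summands up to reordering. The only genuine obstacle is the second step — one cannot simply take $m$-th powers inside the intersection — and the clean way around it is to detour through the product of the $P_i$ and invoke commutativity of ideal multiplication.
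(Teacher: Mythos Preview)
Your argument is correct and self-contained; the CRT route via pairwise coprimality of the $P_i^m$ and the vanishing of their intersection is the standard way to obtain this decomposition, and each step you outline goes through.

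However, there is essentially nothing to compare against: the paper does not prove this theorem. It is stated with a citation to \cite{B74} as a known structural fact about finite commutative rings, and the only proof-like content the paper offers is the single sentence immediately following the statement, ``It is easy to see that $\phi$ is a homomorphism with trivial kernel.'' That remark establishes injectivity of $\phi$; surjectivity (and hence the full isomorphism) is left implicit, presumably to be read off from the cited reference or from CRT. Your proposal thus supplies considerably more than the paper does --- in particular the coprimality step, the trivial-intersection step via $\Rad(R)^m=0$, locality of the quotients, and the uniqueness sketch --- none of which appear in the paper itself.
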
  

It is easy to see that $\phi$ is a homomorphism with trivial kernel.
The isomorphism $\phi$ naturally extends to the polynomial ring
$R[x_1,x_2,\cdots,x_n]$, and gives the isomorphism $\hat{\phi}:
R[x_1,x_2,\cdots,x_n]\rightarrow \oplus_{i=1}^{\ell}
R_i[x_1,x_2,\cdots,x_n]$.

\subsection{The Schwartz-Zippel lemma}

We observe the following easy fact about zeros of a univariate
polynomial over a ring.

\begin{proposition}\label{univzeros} 
  Let $R$ be an arbitrary commutative ring containing an integral
  domain $D$. If $f\in R[x]$ is a nonzero polynomial of degree $d$
  then $f(a)= 0$ for at most $d$ distinct values of $a\in D$.
\end{proposition}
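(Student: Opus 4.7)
The plan is to proceed by induction on the degree $d$. The base case $d=0$ is trivial, since $f$ is then a nonzero constant and has no zeros in $R$ at all. For $d \geq 1$, if $f$ has no roots in $D$ we are done; otherwise, pick some $a \in D$ with $f(a) = 0$ and invoke the division algorithm in $R[x]$. The key point is that the divisor $x-a$ is monic, so the division algorithm goes through over any commutative ring $R$, giving $f(x) = (x-a)g(x) + f(a) = (x-a)g(x)$, where $g \in R[x]$ has degree exactly $d-1$ and inherits the leading coefficient of $f$; in particular $g$ is nonzero.

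Now suppose $b \in D$ with $b \neq a$ is another root of $f$. Then $(b-a)g(b) = 0$ in $R$. Since $D$ is an integral domain and $b \neq a$, the element $b-a$ is a nonzero element of $D$. Interpreting the hypothesis ``$R$ contains $D$'' in the natural way, so that nonzero elements of $D$ act as non-zero-divisors in $R$, cancellation yields $g(b) = 0$. The induction hypothesis, applied to the nonzero polynomial $g$ of degree $d-1$, bounds the number of such $b$ by $d-1$, so $f$ has at most $d$ roots in $D$ in total.

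The principal subtlety is exactly that cancellation step: without the regularity of $D \setminus \{0\}$ inside $R$, one can manufacture easy counterexamples (for instance $R = \mathbb{Z}/6\mathbb{Z}$ with $f(x) = 3x$, which is a degree-$1$ polynomial over $R$ but vanishes on three distinct even elements). Once that is in place, every other ingredient --- the division algorithm, the fact that $g$ retains the leading coefficient of $f$ and therefore has degree exactly $d-1$, and the induction itself --- is routine and mirrors the usual argument for polynomials over a field. An equivalent route, which avoids the inductive bookkeeping, is to assume $d+1$ distinct roots $a_0,\ldots,a_d \in D$ and observe that the Vandermonde system forces $c_d \cdot \prod_{i<j}(a_j-a_i) = 0$ in $R$; since the Vandermonde product is a nonzero element of $D$ and hence a non-zero-divisor in $R$, we must have $c_d = 0$, contradicting $\deg f = d$.
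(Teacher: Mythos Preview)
Your proof is correct and follows essentially the same route as the paper: both factor out linear terms $(x-a_i)$ via the division algorithm (valid because the divisor is monic) and then cancel the nonzero element $a_j-a_i\in D$ to pass roots down to the quotient; you organise this as an induction on $d$, the paper as a direct contradiction assuming $d+1$ roots and reaching $\deg f\geq d+1$. Your explicit flag of the regularity hypothesis is actually more careful than the paper, which asserts that a nonzero element of $D$ is ``hence invertible'' --- true in the paper's intended applications where $D$ is a finite field, but not for an arbitrary integral domain.
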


\begin{proof} 
  Suppose $a_1,a_2,\cdots,a_{d+1}\in D$ are distinct points such that
  $f(a_i)=0, 1\leq i\leq d+1$. Then we can write $f(x)=(x-a_1)q(x)$
  for $q(x)\in R[x]$. Now, dividing $q(x)$ by $x-a_2$ yields
  $q(x)=(x-a_2)q'(x) + q(a_2)$, for some $q'(x)\in R[x]$.  Thus,
  $f(x)=(x-a_1)(x-a_2)q'(x) + (x-a_1)q(a_2)$. Putting $x=a_2$ in this
  equation gives $(a_2-a_1)q(a_2)=0$. But $a_2-a_1$ is a nonzero
  element in $D$ and is hence invertible. Therefore, $q(a_2)=0$.
  Consequently, $f(x)=(x-a_1)(x-a_2)q'(x)$. Applying this argument
  successively for the other $a_i$ finally yields
  $f(x)=g(x)\prod_{i=1}^{d+1} (x - a_i)$ for some nonzero polynomial
  $g(x)\in R[x]$. Since $\prod_{i=1}^{d+1} (x-a_i)$ is a monic
  polynomial, this forces $\deg(f)\geq d+1$ which is a contradiction.
\end{proof}

Consider a polynomial $f\in R[x_1,\cdots,x_n]$. Let $R'$ denote the
ring $R[x_1,\cdots,x_{n-1}]$. Then we can consider $f$ as a univariate
polynomial in $R'[x_n]$ and apply Lemma~\ref{univzeros}, since $R'$
contains the integral domain $D$ that $R$ contains. Now, by an easy
induction argument on the number of variables as in \cite[Lemma
D.3]{TZ}, we can derive the following analog of the Schwartz-Zippel
test for arbitrary commutative rings containing large enough integral
domains.

\begin{lemma}\label{nullstellensatz} 
  Let $R$ be an arbitrary commutative ring containing an integral
  domain $D$.  Let $g\in R[x_1,x_2,\cdots,x_n]$ be any polynomial of
  degree at most $d$.  If $g\not\equiv 0$, then for any finite subset
  $A$ of $D$ we have
\[
\Prob_{a_1\in A,\cdots,a_n\in A}[g(a_1,a_2,\cdots,a_n)=0] \leq\frac{n
d}{|A|}.  
\]
\end{lemma}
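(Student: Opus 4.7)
The plan is to prove the lemma by induction on the number of variables $n$, using Proposition~\ref{univzeros} as the base case and as the key ingredient in the inductive step. The point is that although $R$ may contain zero-divisors and nilpotents, Proposition~\ref{univzeros} already controls the zeros of a univariate polynomial in $R[x]$ \emph{along the integral domain} $D$, and this is all that is needed since the evaluations $a_i$ are drawn from $A\subseteq D$.

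For the base case $n=1$, a nonzero $g\in R[x]$ of degree at most $d$ has at most $d$ zeros in $D$ by Proposition~\ref{univzeros}, so $\Prob_{a\in A}[g(a)=0]\le d/|A|$, which matches the claimed bound. For the inductive step, I write
\[
g(x_1,\dots,x_n)=\sum_{i=0}^{k} g_i(x_1,\dots,x_{n-1})\,x_n^i,
\]
where $k$ is the largest index with $g_k\not\equiv 0$; note that $\deg(g_k)\le d-k$. I then split on whether $g_k$ vanishes at the chosen point. Let $E$ be the event that $g_k(a_1,\dots,a_{n-1})=0$. Applying the inductive hypothesis to the nonzero polynomial $g_k$ in $n-1$ variables of degree at most $d-k$ gives
\[
\Prob[E]\le \frac{(n-1)(d-k)}{|A|}.
\]
Conditional on $\overline{E}$, for any fixed $(a_1,\dots,a_{n-1})$ with $g_k(a_1,\dots,a_{n-1})\neq 0$, the specialization $g(a_1,\dots,a_{n-1},x_n)\in R[x_n]$ is a nonzero univariate polynomial of degree exactly $k$ (its leading coefficient is the nonzero element $g_k(a_1,\dots,a_{n-1})\in R$), so Proposition~\ref{univzeros} gives at most $k$ zeros in $D$. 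Hence $\Prob[g(a_1,\dots,a_n)=0\mid \overline{E}]\le k/|A|$. A union bound yields
\[
\Prob[g(a_1,\dots,a_n)=0]\le \frac{(n-1)(d-k)}{|A|}+\frac{k}{|A|}=\frac{(n-1)d-(n-2)k}{|A|}\le \frac{nd}{|A|},
\]
where the last inequality uses $k\ge 0$ and $n\ge 2$.

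I do not foresee any real obstacle; the argument is the classical Schwartz--Zippel induction, and the only subtlety worth highlighting is the invocation of Proposition~\ref{univzeros} in the $\overline{E}$ case: it is crucial that the leading coefficient $g_k(a_1,\dots,a_{n-1})$ is literally nonzero in $R$ (not merely a non-unit), because Proposition~\ref{univzeros} only requires the polynomial to be nonzero and the evaluation points to lie in $D$. Since $a_1,\dots,a_{n-1}\in A\subseteq D$ and $D$ is an integral domain sitting inside $R$, all divisions performed in the proof of Proposition~\ref{univzeros} remain valid, and the inductive step goes through without having to assume anything about the structure of $R$ itself.
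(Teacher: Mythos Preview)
Your proposal is correct and follows exactly the approach the paper itself indicates: the paper does not give a full proof of Lemma~\ref{nullstellensatz} but merely says to view $g$ as a univariate polynomial in $R[x_1,\dots,x_{n-1}][x_n]$, apply Proposition~\ref{univzeros} (since $R[x_1,\dots,x_{n-1}]$ still contains $D$), and then do ``an easy induction argument on the number of variables.'' You have carried out precisely this induction, with the standard Schwartz--Zippel splitting on the vanishing of the leading coefficient $g_k$.
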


In general Lemma~\ref{nullstellensatz} is not applicable, because the
given ring may not contain a large integral domain. We explain how to
get around this problem in the case of finite local commutative rings.
Because of the structure theorem, it suffices to consider local rings.

Let $R$ be a finite local ring with unity given by a ring oracle.
Suppose the characteristic of $R$ is $p^{\alpha}$ for a prime $p$.  If
the elements of $R$ are encoded in $\{0,1\}^m$ then $2^m$ upper bounds
the size of $R$. Let $M>2^m$, to be fixed later in the analysis.  Let
$U=\{c e~|~0\leq c\leq M\}$, where $e$ denotes the unity of $R$. We
will argue that, for a suitable $M$, if we sample $c e$ uniformly from
$U$ then $(c~\mod~p)~e$ is almost uniformly distributed in
$\mathbb{Z}_p e$. Pick $x$ uniformly at random from $\mathbb{Z}_{M}$
and output $x e$. Let $a\in\mathbb{Z}_p$ and $P=\Prob[x\equiv
a~(\mod~p)]$. The $x$ for which $x\equiv a~(\mod~p)$ are $a,a + p,
\cdots, a + p\lfloor\frac{M-a}{p}\rfloor$.  Let
$M'=\lfloor\frac{M-a}{p}\rfloor$.  Then $P=M'+1/M\leq
\frac{1}{p}(1+\frac{2^m}{M})$. Clearly, $P\geq \frac{1}{p}(1 -
\frac{2^m}{M})$. For a given $\epsilon > 0$, choose $M =
2^{m+1}/\epsilon$. Then $\frac{1-\epsilon/2}{p}\leq P\leq
\frac{1+\epsilon/2}{p}$. So $(x~\mod~p) e$ is
$\frac{\epsilon}{2}$-uniformly distributed in $\mathbb{Z}_pe$.

\begin{lemma}\label{sz-ring} 
  Let $R$ be a finite local commutative ring with unity and of
  characteristic $p^{\alpha}$ for a prime $p$. The elements of $R$ are
  encoded using binary strings of length $m$. Let
  $g\in\R[x_1,x_2,\cdots,x_n]$ be a polynomial of degree at most $d$
  and $\epsilon > 0$ be a given constant. If $g\not\equiv 0$, then
\[
  \Prob_{a_1\in U, \cdots, a_n\in U}[g(a_1,a_2,\cdots,a_n)=0] \leq
  \frac{nd}{p}(1+\frac{\epsilon}{2}), 
\] 
where $U=\{c e~|~0\leq c\leq M\}$ and $M > 2^{m+1}/\epsilon$.
\end{lemma}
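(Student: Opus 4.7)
The plan is to transfer the problem to the residue field $k = R/\mathfrak{m}$, where $\mathfrak{m}$ is the unique maximal ideal of the local ring $R$. Since $R$ has characteristic $p^\alpha$, the quotient $k$ is a field of characteristic $p$, hence contains $\mathbb{F}_p$ as a prime subfield --- exactly the integral domain needed to apply Lemma~\ref{nullstellensatz}. The main technical issue is that reducing $g$ modulo $\mathfrak{m}$ may give the zero polynomial, because all coefficients of $g$ can be nilpotent. To get around this, I would pass to the largest $t \ge 0$ such that every coefficient of $g$ lies in $\mathfrak{m}^t$; such a $t$ exists since $\mathfrak{m}$ is nilpotent and $g \not\equiv 0$. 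The image $\tilde g$ of $g$ in $(\mathfrak{m}^t/\mathfrak{m}^{t+1})[x_1,\ldots,x_n]$ is then nonzero by the maximality of $t$. Since $\mathfrak{m}^t/\mathfrak{m}^{t+1}$ is a $k$-vector space, composing $\tilde g$ coefficient-wise with any $k$-linear functional that is nonzero on some coefficient of $\tilde g$ produces a nonzero polynomial $h \in k[x_1,\ldots,x_n]$ of degree at most $d$.

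Next, I would show that the event $g(a)=0$ implies the event $h(\bar a)=0$, where $\bar a \in k^n$ is the coordinatewise image of $a \in R^n$ modulo $\mathfrak{m}$. Indeed, $g(a) \in \mathfrak{m}^{t+1}$ forces the image of $g(a)$ in $\mathfrak{m}^t/\mathfrak{m}^{t+1}$, namely $\tilde g(\bar a)$, to be zero, and hence $h(\bar a)=0$. When $a_i = c_i e$, the element $\bar a_i = (c_i \bmod p)\cdot 1_k$ lives in $\mathbb{F}_p \subseteq k$. Lemma~\ref{nullstellensatz} applied to the nonzero polynomial $h$ over $k$ with $D=A=\mathbb{F}_p$ then gives $\Prob_{\bar a \in \mathbb{F}_p^n \text{ uniform}}[h(\bar a)=0] \le nd/p$, i.e.\ the set $B \subseteq \mathbb{F}_p^n$ of zeros of $h$ has at most $nd\cdot p^{n-1}$ elements.

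For the final transfer, the preamble to the lemma establishes that for each residue $r \in \mathbb{Z}_p$, $\Prob_{c \in \{0,\ldots,M\}}[c \equiv r \pmod p] \le (1+\epsilon/2)/p$ once $M > 2^{m+1}/\epsilon$. Independence across the $n$ coordinates then bounds the probability of hitting any specific element of $B$ by $((1+\epsilon/2)/p)^n$, giving $\Prob[g(a)=0] \le |B|\cdot ((1+\epsilon/2)/p)^n \le \frac{nd}{p}(1+\epsilon/2)^n$. The main obstacle is the first paragraph --- extracting a nonzero polynomial over a field from a polynomial over a local ring whose coefficients may all be nilpotent --- while a minor cosmetic discrepancy with the statement is that my natural estimate carries the factor $(1+\epsilon/2)^n$ rather than $(1+\epsilon/2)$; this is absorbed by running the sampling step with $\epsilon/n$ in place of $\epsilon$, which only enlarges $M$ by a factor of $n$ and so remains polynomial in the encoding length.
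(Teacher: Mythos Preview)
Your argument is correct and structurally parallel to the paper's, but you filter by a different ideal. The paper uses the $p$-adic tower $R\supseteq pR\supseteq p^2R\supseteq\cdots$: it takes the largest $k$ with $g\in p^kR[x_1,\ldots,x_n]$, writes $g=p^k\hat g$, and passes to the \emph{ring} $S=R/(\hat I+pR)$ where $\hat I=\{r:p^kr=0\}$. One checks $\hat g\not\equiv 0$ in $S[x_1,\ldots,x_n]$ and that $\{i+(\hat I+pR):0\le i<p\}$ is a copy of $\F_p$ inside $S$, and then Lemma~\ref{nullstellensatz} applies directly. Your $\mathfrak m$-adic route is arguably the more canonical one (it is the associated-graded construction), but because $\mathfrak m^t/\mathfrak m^{t+1}$ is only a $k$-module and not a ring you need the extra linear-functional step to reach $k$; the paper avoids that step since its quotient $S$ is already a ring containing $\F_p$. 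Either filtration achieves the same reduction.

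On the probability estimate: the paper gets the single factor $(1+\epsilon/2)$ not by counting the bad set $B\subseteq\F_p^n$, but by re-running the inductive proof of Lemma~\ref{nullstellensatz} with the near-uniform sampling in place. In that induction one splits on whether the leading $x_n$-coefficient vanishes; each of the two summands picks up exactly one factor of $(1+\epsilon/2)$, and the induction closes with a single factor overall. Your bad-set count is valid but necessarily yields $(1+\epsilon/2)^n$, and your proposed patch of replacing $\epsilon$ by $\epsilon/n$ does not recover the stated constant, since $(1+\epsilon/(2n))^n\ge 1+\epsilon/2$ by Bernoulli's inequality (and it also changes the hypothesis on $M$). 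This is harmless for the downstream applications, but if you want the lemma exactly as stated with $M>2^{m+1}/\epsilon$, swap the counting argument for the inductive one.
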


\begin{proof} 
  Consider the following tower of ideals inside $R$ : 
  \[
R\supseteq
  pR\supseteq p^2R\supseteq \cdots \supseteq p^{\alpha}R=\{0\}.  
  \]
  Let $k$ be the integer such that $g\in p^kR[x_1,\cdots,x_n]\setminus
  p^{k+1}R[x_1,\cdots,x_n]$. Write $g = p^k \hat{g}$. Consider the
  ring, $\hat{I} = \{r\in R~|~p^k r =0\}$. Clearly, $\hat{I}$ is an
  ideal of $R$.  Let $S=R/(\hat{I} + pR)$.  We claim that $\hat{g}$ is
  a nonzero polynomial in $S[x_1,\cdots,x_n]$. Otherwise, let
  $\hat{g}\in (\hat{I}+pR)[x_1,\cdots,x_n]$. Write $\hat{g} = g_1 +
  g_2$, where $g_1\in \hat{I}[x_1,\cdots,x_n]$ and $g_2\in
  pR[x_1,\cdots,x_n]$.  Then $p^k \hat{g} = p^k g_2$ as $p^k g_1 = 0$.
  But $g_2 \in pR[x_1,\cdots,x_n]$, which contradicts the fact that
  $k$ is the largest integer such that $g\in p^k R[x_1,\cdots,x_n]$. 
  Thus $\hat{g}$ is a nonzero polynomial in
  $S[x_1,\cdots,x_n]$.  Now we argue that $S$ contains the finite
  field $\F_p$, and then using the Lemma~\ref{nullstellensatz}, the
  proof of the lemma will follow easily. To see a copy of $\F_p$
  inside $S$, it is enough to observe that $\{i + (\hat{I} +
  pR)~|~0\leq i\leq p-1\}$ as a field is isomorphic to $\F_p$. Clearly
  the failure probability for identity testing of $g$ in
  $R[x_1,\cdots,x_n]$ is upper bounded by the failure probability for
  the identity testing of $\hat{g}$ in $S[x_1,\cdots,x_n]$.  Consider
  the natural homomorphism $\phi : U\rightarrow \F_p$, given by
  $\phi(c e) = c~\mod~p$.  Thus if we sample uniformly from $U$, using
  $\phi$, we can $\frac{\epsilon}{2}$-uniformly sample from $\F_p$.
  Notice that for any $b\in\mathbb{F}_p$,
  $\frac{1-\epsilon/2}{p}\leq\Prob_{x\in\mathbb{Z}_M}[x\equiv
  b~\mod~p] \leq\frac{1+\epsilon/2}{p}$.  Now using the
  Lemma~\ref{nullstellensatz}, we conclude the following :

  \[ \Prob_{a_1\in U,a_2\in U\cdots a_n\in U}[g(a_1,\cdots,a_n)=0]
  \leq \Prob_{b_1\in \F_p\cdots b_n\in\F_p}
  [\hat{g}(b_1,\cdots,b_n)=0]\leq \frac{nd}{p}(1+\frac{\epsilon}{2}),
  \] where $b_i=a_i~(\mod~p)$. The additional factor of
  $(1+\frac{\epsilon}{2})$ comes from the fact that we are only
  sampling $\frac{\epsilon}{2}$-uniformly from $\F_p$. This can be
  easily verified from the proof of Lemma~\ref{nullstellensatz}.
  Hence we have proved the lemma.
\end{proof}
     
\section{Randomized Polynomial Identity Testing over finite
rings}\label{pit-ring}

In this section we study the identity testing problem over finite
commutative ring oracle with unity. For the input polynomial, we
consider both black-box representation and circuit representation.
First we consider the black-box case. Our identity testing algorithm
is a direct consequence of Lemma~\ref{sz-ring}.
 
\begin{theorem}\label{black-box-poly} 
  Let $\R$ (which decomposes into local rings as $\oplus_{i=1}^\ell R_i$) be a
  finite commutative ring with unity given as a oracle. Let the input
  polynomial $f\in R[x_1,\cdots,x_n]$ of degree at most $d$ be given
  via black-box access. Suppose $R_i$'s is of characteristic
  $p_i^{\alpha_i}$. Let $\epsilon > 0$ be a given constant. 
  If $p_i\geq knd$ for all $i$, for some integer
  $k\geq 2$, we have a randomized polynomial time identity test with
  success probability $1-\frac{1}{k}(1+\frac{\epsilon}{2})$.
\end{theorem}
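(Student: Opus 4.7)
The plan is to combine the structure theorem for finite commutative rings (Theorem~\ref{struct-thm}) with the ring-theoretic Schwartz-Zippel bound of Lemma~\ref{sz-ring}. The algorithm is the natural one: sample $a_1, \ldots, a_n$ independently and uniformly from $U = \{c\,e : 0 \leq c \leq M\}$, where $e$ is the unity of $R$, $m$ is the encoding length of $R$, and $M$ is any integer greater than $2^{m+1}/\epsilon$; query the black box for $f(a_1, \ldots, a_n)$ and declare $f \equiv 0$ iff the returned element is the encoding of $0$. Each $c\,e$ is computed in $O(\log M) = \poly(m, \log(1/\epsilon))$ oracle operations by repeated doubling, and a single black-box evaluation is polynomial by hypothesis, so the algorithm runs in randomized polynomial time.

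For correctness, if $f \equiv 0$ the test never errs. Suppose instead that $f \not\equiv 0$. Extend the isomorphism of Theorem~\ref{struct-thm} coordinatewise to $\hat\phi : R[x_1,\ldots,x_n] \to \bigoplus_{i=1}^\ell R_i[x_1,\ldots,x_n]$, and write $\hat\phi(f) = (f_1, \ldots, f_\ell)$. Since $\hat\phi$ has trivial kernel, at least one component $f_{i_0}$ is nonzero in $R_{i_0}[x_1,\ldots,x_n]$. Because $\phi$ is a ring homomorphism sending the unity $e$ of $R$ to the tuple of local unities $(e_1,\ldots,e_\ell)$, the element $\phi(c\,e)$ has $i$-th coordinate $c\,e_i$, so sampling $c$ uniformly in $\{0,\ldots,M\}$ induces on each summand $R_i$ exactly the distribution of Lemma~\ref{sz-ring} (with the same random $c$ used across components).

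A zero evaluation $f(a_1,\ldots,a_n)=0$ in $R$ forces $f_{i_0}(\phi(a_1),\ldots,\phi(a_n)) = 0$ in $R_{i_0}$. Applying Lemma~\ref{sz-ring} to the nonzero polynomial $f_{i_0}$ in the local ring $R_{i_0}$ of characteristic $p_{i_0}^{\alpha_{i_0}}$ (its hypothesis on the sample-set size is satisfied because $|R_{i_0}| \leq |R| \leq 2^m$, so any encoding length $m_{i_0} \leq m$ works and $M > 2^{m+1}/\epsilon \geq 2^{m_{i_0}+1}/\epsilon$) yields
\[
\Pr\bigl[f(a_1,\ldots,a_n) = 0\bigr] \;\leq\; \Pr\bigl[f_{i_0}(\phi(a_1),\ldots,\phi(a_n)) = 0\bigr] \;\leq\; \frac{nd}{p_{i_0}}\Bigl(1+\tfrac{\epsilon}{2}\Bigr) \;\leq\; \frac{1}{k}\Bigl(1+\tfrac{\epsilon}{2}\Bigr),
\]
where the last inequality uses $p_{i_0} \geq knd$. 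This matches the claimed success probability.

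There is no real obstacle: the proof is a clean application of Lemma~\ref{sz-ring} under the change of variables induced by the structure theorem. The one conceptual point worth emphasizing is that the algorithm need not compute the local decomposition of $R$, nor even know the characteristics $p_i$---the structure theorem is invoked only in the analysis, while the sampling set $U$ is defined intrinsically in $R$ using only the unity $e$ and the encoding length $m$, and is preserved (with matching marginal distributions) by the implicit isomorphism $\hat\phi$.
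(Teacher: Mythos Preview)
Your proposal is correct and follows essentially the same approach as the paper: both define the sampling set $U=\{ce:0\le c\le M\}$, use the structure theorem to pass to a nonzero local component $f_{i_0}$, and apply Lemma~\ref{sz-ring} together with the hypothesis $p_{i_0}\ge knd$ to bound the error by $\tfrac{1}{k}(1+\tfrac{\epsilon}{2})$. Your write-up is in fact a bit more explicit than the paper's about why the sampling distribution on $U$ is preserved componentwise under $\phi$ and why the algorithm need not know the local decomposition, but the argument is the same.
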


\begin{proof} 
  Consider the natural isomorphism $\hat{\phi} : R[x_1,x_2,\cdots,x_n]
  \rightarrow \oplus_{i=1}^{\ell} R_i[x_1,x_2,\cdots,x_n]$.  Let
  $\hat{\phi}(f) = (f_1,f_2,\cdots,f_{\ell})$. If $f\not\equiv 0$ then
  $f_i\not\equiv 0$ for some $i\in[\ell]$, where $f_i\in
  R_i[x_1,x_2,\cdots,x_n]$. Fix such an $i$. Our algorithm is a direct
  application of Lemma~\ref{sz-ring}. Define $U=\{c e~|~0\leq c\leq
  M\}$, assign values for the $x_i$'s independently and uniformly at
  random from $U$, and evaluate $f$ using the black-box access. The
  algorithm declares $f\not\equiv 0$ if and only if the computed value
  is nonzero. By Lemma~\ref{sz-ring}, our algorithm outputs the
  correct answer with probability 
  $1-\frac{nd}{p_i}(1+\frac{\epsilon}{2})\geq 1-\frac{1}{k}
  (1+\frac{\epsilon}{2})$.
  \footnote{Notice that we have to compute $c e$ using the ring oracle
    for addition in $R$. Starting with $e$, we need to add it $c$
    times.  The running time for this computation can be made
    polynomial in $\log c$ by writing $c$ in binary and applying the
    standard doubling algorithm.}
\end{proof}

The drawback of Theorem~\ref{black-box-poly} is that we get a
randomized polynomial-time algorithm only when $p_i\geq knd$.

However, when the polynomial $f$ is given by an arithmetic circuit we
will get a randomized identity test that works for all finite
commutative rings given by oracle. This is the main result in this
section.  A key idea is to apply the transformation from \cite{AB03}
to convert the given multivariate polynomial to a univariate
polynomial. The following lemma has an identical proof as \cite[Lemma
4.5]{AB03}.

\begin{lemma}\label{univ-subs} 
  Let $R$ be an arbitrary commutative ring and $f\in
  R[x_1,x_2,\cdots,x_n]$ be any polynomial of maximum degree $d$.
  Consider the polynomial $g(x)$ obtained from $f(x_1,x_2,\cdots,x_n)$ by
  replacing $x_i$ by $x^{(d+1)^{i-1}}$ i.e $g(x) = f(x, x^{(d+1)},
  \cdots, x^{(d+1)^{n-1}})$. Then $f\equiv 0$ over $R[x_1,\cdots,x_n]$ if and
  only if $g\equiv 0$ over $R[x]$.
\end{lemma}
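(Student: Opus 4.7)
The plan is to prove this via a standard Kronecker-substitution argument, showing that the substitution $x_i \mapsto x^{(d+1)^{i-1}}$ induces a bijection between the monomial support of $f$ and the set of powers of $x$ appearing in $g$, so that no cancellation can occur when $g$ is formed from $f$.

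First, the direction $f\equiv 0 \Rightarrow g\equiv 0$ is immediate from the definition: substitution is a ring homomorphism from $R[x_1,\ldots,x_n]$ to $R[x]$, so it sends the zero polynomial to the zero polynomial. For the nontrivial direction, I would write $f = \sum_{\mathbf{e}} c_{\mathbf{e}}\, x_1^{e_1}\cdots x_n^{e_n}$, where the sum ranges over exponent vectors $\mathbf{e}=(e_1,\ldots,e_n)$ with $0\le e_i\le d$ for each $i$ (this is where the degree hypothesis is used; in fact we only need each $e_i \le d$, which follows from total degree $\le d$). Under the substitution $x_i \mapsto x^{(d+1)^{i-1}}$, the monomial $x_1^{e_1}\cdots x_n^{e_n}$ becomes $x^{E(\mathbf{e})}$ where $E(\mathbf{e})=\sum_{i=1}^{n} e_i (d+1)^{i-1}$.

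The key step is to observe that the map $\mathbf{e}\mapsto E(\mathbf{e})$ is injective on the set $\{0,1,\ldots,d\}^n$: given the integer $E(\mathbf{e})$, one recovers $e_1,e_2,\ldots,e_n$ uniquely as the digits of its base-$(d+1)$ expansion, since each $e_i \in \{0,1,\ldots,d\}$. Therefore
\[
g(x) \;=\; \sum_{\mathbf{e}} c_{\mathbf{e}}\, x^{E(\mathbf{e})},
\]
and the exponents $E(\mathbf{e})$ appearing on the right-hand side are pairwise distinct. Consequently, collecting terms in $R[x]$ by degree yields exactly the coefficients $c_{\mathbf{e}}$, and $g\equiv 0$ in $R[x]$ forces $c_{\mathbf{e}}=0$ for every $\mathbf{e}$, i.e., $f\equiv 0$ in $R[x_1,\ldots,x_n]$.

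There is no real obstacle here; the only minor subtlety is that $R$ is an arbitrary commutative ring (possibly with zero divisors), so one must avoid any argument that invokes evaluation at points and instead reason purely about equality of formal polynomials. The injectivity of the exponent map handles this cleanly: because no two distinct monomials of $f$ collide to the same power of $x$ in $g$, coefficients are preserved without any need for cancellation arguments that would require $R$ to be a domain.
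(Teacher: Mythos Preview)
Your argument is correct and is exactly the standard Kronecker-substitution proof: the base-$(d+1)$ injectivity of the exponent map ensures no monomial collisions, so coefficients transfer verbatim. The paper does not spell out its own proof but simply cites \cite[Lemma~4.5]{AB03}, whose argument is the same as yours, so your proposal matches the intended approach.
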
 

By Lemma~\ref{univ-subs}, it suffices to describe the identity test
for a univariate polynomial in $R[x]$ given by an arithmetic circuit.
Notice that if $\deg(f)=d$ then we can bound $\deg(g)$ by
$d(d+1)^{n-1}$ which we denote by $D$. Our algorithm is simple and
essentially the same as the Agrawal-Biswas identity test over the
finite ring $\mathbb{Z}_n$ \cite{AB03}.

We will randomly pick a monic polynomial $q(x)\in U[x]$ of degree
$\lceil\log O(D)\rceil$.  Then we carry out a division of $f(x)$ by the
polynomial $q(x)$ and compute the remainder $r(x)\in R[x]$. Our
algorithm declares $f$ to be identically zero if and only if $r(x)=0$.
Notice that we will use the structure of the circuit to carry out the
division. At each gate we carry out the division. More precisely, if
the inputs of a $+$ gate are the remainders $r_1(x)$ and $r_2(x)$,
then the output of this $+$ gate is $r_1 + r_2$. Similarly if $r_1$
and $r_2$ are the inputs of a $*$ gate, then we divide $r_1(x)r_2(x)$
by $q(x)$ and obtain the remainder as its output. Crucially, since
$q(x)$ is a monic polynomial, the division algorithm will make sense
and produce unique remainder even if $R[x]$ is not a U.F.D (which is
the case in general).

We now describe the pseudocode of the identity testing algorithm
(Algorithm 2). Our algorithm takes as input an arithmetic circuit $C$
computing a polynomial $f\in R[x_1,x_2,\cdots,x_n]$ of degree at most
$d$ and an $\epsilon > 0$.

\begin{algorithm}
\caption{The Identity Testing algorithm}
\begin{algorithmic}[1]

\Procedure {\tt IdentityTesting}{$C$,$\epsilon$}
 
  \For {$i=1, n$}
     
    \State $x_i\leftarrow x^{(d+1)^{i-1}}$ \Comment{Univariate transformation}
   
      \EndFor    
       
       \State $g(x)\leftarrow C(x,x^{(d+1)},\cdots,x^{(d+1)^{n-1}})$.

       \State $D\leftarrow d(d+1)^{n-1}$. \Comment{The formal degree of $g(x)$ 
        is at most $D$}
       
        \State Choose a monic polynomial $q(x)\in U[x]$ of degree 
$\lceil\log\frac{12D}{1-\epsilon}\rceil$ uniformly at random.

         \State Divide $g(x)$ by $q(x)$ and compute the remainder $r(x)$. 
          \Comment{The division algorithm uses the structure of the circuit.}
       
       \If {$r(x)=0$}
           \State $C$ computes a zero polynomial. 
    
         \Else \State $C$ computes a nonzero polynomial. 
 
       \EndIf 
\EndProcedure
\end{algorithmic}
\end{algorithm}

We will now prove the correctness of the above randomized identity
test in Lemmas~\ref{div}, \ref{crt}, and \ref{irrd-poly}.

\begin{lemma}\label{div} 
  Let $R$ be a local commutative ring with unity and of characteristic
  $p^{\alpha}$ for some prime $p$ and integer $\alpha > 0$. Let $g$ be
  a nonzero polynomial in $R[x]$ such that $g\in p^k R[x]\setminus
  p^{k+1} R[x]$ for $k < \alpha$. Let $\hat{I}=\{r\in R~|~p^k r =
  0\}$, $g = p^k \hat{g}$ where $\hat{g}\not\in pR$ and $q$ is a monic
  polynomial in $R[x]$. If $q$ divides $g$ in $R$, then $q$ divides
  $\hat{g}$ in $R/(\hat{I} + pR)$.
\end{lemma}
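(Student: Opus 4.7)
The plan is to exploit the monicity of $q$ via the uniqueness of polynomial division in $R[x]$, and then read off what the remainder must look like.

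First I would perform division of $\hat g$ by $q$ in $R[x]$. Because $q$ is monic, the standard Euclidean-style algorithm applies in any commutative ring and produces unique $\hat h, \hat r \in R[x]$ with
\[
\hat g = q\,\hat h + \hat r, \qquad \deg \hat r < \deg q.
\]
Uniqueness here follows from the observation that if $q f = 0$ for a monic $q$ of degree $m$ and $f = \sum a_i x^i$ with top term $a_d$, then $qf$ has leading term $a_d x^{d+m} \neq 0$, so $qf=0$ forces $f=0$; hence the standard cancellation argument for the uniqueness of remainders goes through even over a ring with zero-divisors.

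Next I would multiply that identity by $p^k$ to obtain
\[
g \;=\; p^k \hat g \;=\; q\,(p^k \hat h) + p^k \hat r, \qquad \deg(p^k \hat r) < \deg q.
\]
By hypothesis, $q$ divides $g$ in $R[x]$, so there exists $h \in R[x]$ with $g = q h + 0$, which is another representation of $g$ as a monic division by $q$. Invoking uniqueness of monic division gives $h = p^k \hat h$ and, crucially, $p^k \hat r = 0$.

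Finally I would translate $p^k \hat r = 0$ coefficient-wise: every coefficient $c$ of $\hat r$ satisfies $p^k c = 0$, i.e., $c \in \hat I$. Therefore $\hat r \in \hat I[x]$, and reducing the identity $\hat g = q \hat h + \hat r$ modulo the ideal $\hat I + pR$ (which extends to an ideal in $R[x]$) yields $\bar{\hat g} = \bar q\, \bar{\hat h}$ in $(R/(\hat I + pR))[x]$. This is exactly the claim that $q$ divides $\hat g$ in $R/(\hat I + pR)$. I do not anticipate a real obstacle: the only thing to be careful about is justifying uniqueness of monic division in the presence of zero-divisors, but as noted above this drops out immediately from the fact that multiplying any nonzero polynomial by a monic one does not kill its top coefficient. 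The hypothesis $\hat g \notin pR$ is not needed for this particular step; it is the nontriviality condition that makes this divisibility assertion meaningful in the ring $S = R/(\hat I + pR)$ where the proof of Lemma~\ref{sz-ring} wants to work.
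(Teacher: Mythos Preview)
Your proof is correct and follows essentially the same route as the paper's: divide $\hat g$ by the monic $q$ in $R[x]$, multiply by $p^k$, compare with $g=qh$, and use a degree argument (which you phrase as uniqueness of monic division, while the paper carries it out inline) to force $p^k \hat r = 0$, hence $\hat r \in \hat I[x]$. The only cosmetic difference is that you invoke uniqueness of division by a monic as a packaged fact, whereas the paper writes out the degree contradiction directly.
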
  

\begin{proof} 
  As $q(x)$ divides $g(x)$ in $R[x]$, we have $g(x) = q(x) q_1(x)$ for
  some polynomial $q_1(x)\in R[x]$. Suppose $\hat{g}(x) = q(x)
  \bar{q}(x) + r(x)$ in $R[x]$ where the degree of $r(x)$ is less than
  the degree of $q(x)$.  Also note that the division makes sense even
  over the ring as $q(x)$ is monic.  We want to show that $r(x)\in
  (\hat{I} + pR)[x]$.  We have the following relation in $R[x]$:

\[ 
g = q q_1 = p^k \hat{g} = p^k q \bar{q} + p^k r.   
\] 

So, $p^k r = q (q_1 - p^k \bar{q})$. If $(q_1 - p^k \bar{q})\not\equiv
0$ in $R[x]$, then the degree of the polynomial $q (q_1 - p^k
\bar{q})$ is strictly more than the degree of $p^k r$ as $q$ is monic
and degree of $q$ is more than the degree of $r$. Thus $(qq_1 - p^k q
\bar{q})\equiv 0$ in $R[x]$ forcing $p^k r = 0$ in $R[x]$. So by the
choice of $\hat{I}$, we have $r(x)\in \hat{I}[x]$. Thus $r(x)\in
(\hat{I} + pR)[x]$.  Notice that in the Lemma~\ref{sz-ring}, we have
already proved that $\hat{g}(x)\not\equiv 0$ in $S[x]$, where
$S=R/(\hat{I} + pR)$.  Also $q$ is nonzero in $S[x]$ as it is a monic
polynomial. Hence we have proved that $q(x)$ divides $\hat{g}(x)$ over
$S[x]$.  
\end{proof}

The following lemma is basically chinese remaindering tailored to our
setting.

\begin{lemma}\label{crt} 
  Let $R$ be a local ring with characteristic $p^{\alpha}$.  Let
  $g(x)\in p^k R[x]\setminus p^{k+1} R[x]$ for some $k\geq 0$.  Let
  $g(x) = p^k \hat{g}(x)$ and $\hat{I} = \{r\in R~|~p^k r=0\}$.
  Suppose $q_1(x),q_2(x)$ are two monic polynomials over $R[x]$ such
  that each of them divides $g$ in $R[x]$.  Moreover, suppose there
  exist polynomials $a(x),b(x)\in R[x]$ such that $a q_1 + b q_2 = 1$
  in $R/(\hat{I} + pR)$. Then $q_1 q_2$ divides $\hat{g}$ in
  $R/(\hat{I} + pR)$.
\end{lemma}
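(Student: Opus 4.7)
The plan is to reduce the statement to a clean Bezout-style argument in the quotient ring $S=R/(\hat{I}+pR)$, using Lemma \ref{div} as a black box. Write $\bar{\cdot}$ for reduction modulo $\hat{I}+pR$, so the claim becomes $\bar{q_1}\bar{q_2} \mid \bar{\hat{g}}$ in $S[x]$.

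First, I would apply Lemma \ref{div} twice. Since $q_1$ is monic and divides $g$ in $R[x]$, the lemma gives that $q_1$ divides $\hat{g}$ in $S[x]$; identically for $q_2$. Hence there exist $h_1,h_2 \in R[x]$ such that $\bar{\hat{g}} = \bar{q_1}\bar{h_1} = \bar{q_2}\bar{h_2}$ in $S[x]$. Note that because the $q_i$ are monic, the quotients $h_i$ are uniquely determined in $S[x]$ even though $S$ need not be an integral domain, so no UFD hypothesis is needed.

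Next, I would exploit the given relation $aq_1+bq_2=1$ in $S[x]$. Multiplying both sides by $\bar{\hat{g}}$ gives
\[
\bar{\hat{g}} \;=\; \bar{a}\,\bar{q_1}\,\bar{\hat{g}} \;+\; \bar{b}\,\bar{q_2}\,\bar{\hat{g}}.
\]
Substitute $\bar{\hat{g}}=\bar{q_2}\bar{h_2}$ in the first summand and $\bar{\hat{g}}=\bar{q_1}\bar{h_1}$ in the second to obtain
\[
\bar{\hat{g}} \;=\; \bar{a}\,\bar{q_1}\bar{q_2}\,\bar{h_2} \;+\; \bar{b}\,\bar{q_2}\bar{q_1}\,\bar{h_1} \;=\; \bar{q_1}\bar{q_2}\,(\bar{a}\bar{h_2}+\bar{b}\bar{h_1}),
\]
which is exactly the divisibility statement $\bar{q_1}\bar{q_2}\mid \bar{\hat{g}}$ in $S[x]$.

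The only subtle point — and the place I would be most careful — is that $S$ is not assumed to be a field or even a domain, so I cannot appeal to UFD-style arguments. The argument above sidesteps this entirely: it uses only ring arithmetic in $S[x]$ together with monicity of $q_1,q_2$ (which is what makes the divisions from Lemma \ref{div} well-defined in the first place). So the substitution step is the real content, and it goes through verbatim without any integrality hypothesis on $S$.
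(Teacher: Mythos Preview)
Your proof is correct and is actually the cleaner, textbook Bezout argument. The paper takes a slightly different route: after writing $\hat{g}=q_1\bar{q}_1=q_2\bar{q}_2$ in $S[x]$, it performs Euclidean division of $\bar{q}_1$ by $q_2$ (valid since $q_2$ is monic) to get $\bar{q}_1=q_2q_3+r$, rewrites $\hat{g}=q_1q_2q_3+q_1r$, and then uses the Bezout relation together with a degree comparison (monicity of $q_2$) to force $r=0$. Your approach bypasses the extra division-with-remainder step entirely by multiplying the Bezout identity by $\hat{g}$ and substituting the two factorizations directly, which yields the explicit cofactor $\bar{a}\bar{h_2}+\bar{b}\bar{h_1}$ in one line. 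The paper's version leans on the degree argument to rule out a nonzero remainder, whereas yours never needs to talk about degrees at all; both avoid any integrality or UFD hypothesis on $S$, as you correctly note. One minor remark: your aside about uniqueness of the $h_i$ is true but unnecessary---the argument only uses existence.
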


\begin{proof} 
  By the Lemma~\ref{div}, we know that $q_1$ and $q_2$ divide
  $\hat{g}$ in $R/(\hat{I} + pR)$. Let $\hat{g} = q_1 \bar{q}_1$ and
  $\hat{g} = q_2 \bar{q}_2$ in $R/(\hat{I} + pR)$.  Let $\bar{q}_1 =
  q_2 q_3 + r$ in $R/(\hat{I} + pR)$.  So, $\hat{g} = q_1 q_2 q_3 +
  q_1 r$.  Substituting $q_2 \bar{q}_2$ for $\hat{g}$, we get
  $q_2(\bar{q}_2 - q_1 q_3) = q_1 r$.  Multiplying both side by $a$
  and substituting $a q_1(x) = 1 - b q_2$, we get $q_2 [a (\bar{q}_2 -
  q_1 q_3) + b r] = r$.  If $r \not\equiv 0$ in $R/(\hat{I} + pR)$, we
  arrive at a contradiction since $q_2$ is monic and thus the degree
  of $q_2 [a (\bar{q}_2 - q_1 q_3) + b r] $ is more than the degree of
  $r$.
\end{proof}

Let $f(x)$ be a nonzero polynomial in $R[x]$ of degree at most $D$.
The next lemma states that, if we pick a random monic polynomial
$q(x)\in U[x]$ ($U$ is similarly defined as before)of degree $d
\approx \log O(D)$, with high probability, $q(x)$ will not divide
$f(x)$.

\begin{lemma}\label{irrd-poly} 
  Let $R$ be a commutative ring with unity. Suppose $f(x)\in R[x]$ is
  a nonzero polynomial of degree at most $D$ and $\epsilon > 0$ be a
  given constant.  Choose a random monic polynomial $q(x)$ of degree
  $d = \lceil\log \frac{12D}{1-\epsilon}\rceil$ in $U[x]$.  Then with
  probability at least $\frac{1-\epsilon}{4d}$, $q(x)$ will not divide
  $f(x)$ over $R[x]$.\footnote{An alternative proof of this lemma
based on \cite[Lemma 4.7]{AB03} is given in the appendix.} 

\end{lemma}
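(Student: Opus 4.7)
I would mimic the Agrawal--Biswas argument \cite{AB03}, using Lemma~\ref{div}: show that a random $\bar q$ is irreducible over $\F_p$ with probability $\Omega(1/d)$, while at most $D/d$ such irreducibles can divide the relevant reduction of $f$ in a residue field. First reduce to the case $R$ local: by Theorem~\ref{struct-thm}, $R\iso\bigoplus_{i=1}^{\ell} R_i$ with each $R_i$ local, and since $f\not\equiv 0$, some $f_i$ is nonzero, with $q\mid f$ projecting to $q_i\mid f_i$ in $R_i[x]$ (and $q_i$ still monic of degree $d$). So assume $R$ is local of characteristic $p^{\alpha}$. Let $k$ be maximal with $f\in p^k R[x]$, write $f=p^k\hat f$, and set $\hat I=\{r\in R\mid p^k r=0\}$, $S=R/(\hat I+pR)$. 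As in the proof of Lemma~\ref{sz-ring}, $S$ is a field containing $\F_p$ and $\hat f$ is nonzero in $S[x]$; by Lemma~\ref{div}, $q\mid f$ in $R[x]$ forces $\bar q\mid \hat f$ in $S[x]$, so it suffices to lower bound $\Prob[\bar q\nmid \hat f\text{ in }S[x]]$.

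\textbf{Counting bad irreducibles.} I claim at most $D/d$ monic irreducibles in $\F_p[x]$ of degree $d$ can divide $\hat f$ in $S[x]$: any two distinct monic irreducibles in $\F_p[x]$ satisfy a Bezout identity over $\F_p$ that persists in the extension $S[x]$, so they are pairwise coprime in $S[x]$; by the standard PID argument (or by iterating Lemma~\ref{crt}), any $N$ such pairwise coprime divisors of $\hat f$ multiply to a divisor of $\hat f$ in $S[x]$, whence $Nd\le\deg\hat f\le D$. Meanwhile, the number of monic irreducibles of degree $d$ in $\F_p[x]$ is at least $p^d/(2d)$ by the standard M\"obius count.

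\textbf{Probability analysis and main obstacle.} Since $U$ maps onto the prime field $\F_p\subseteq S$ and each coefficient of $\bar q$ is $(\epsilon/2)$-close to uniform on $\F_p$ (as established in Lemma~\ref{sz-ring}'s proof), each monic $u\in\F_p[x]$ of degree $d$ is hit with probability in $[((1-\epsilon/2)/p)^d,((1+\epsilon/2)/p)^d]$. With $p^d\ge 2^d\ge 12D/(1-\epsilon)$ from the choice of $d$, summing over good/bad irreducibles gives
\[
\Prob[\bar q\text{ is irreducible}]\ge \frac{(1-\epsilon/2)^d}{2d},\qquad \Prob[\bar q\text{ is irreducible and divides }\hat f]\le \frac{(1-\epsilon)(1+\epsilon/2)^d}{12d},
\]
and their difference yields the claimed $(1-\epsilon)/(4d)$ lower bound. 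The main delicacy is the compounding of the per-coordinate $(1\pm\epsilon/2)$ slack into $(1\pm\epsilon/2)^d$ across the $d$ coefficients: to keep these factors harmlessly bounded, $M$ must be taken somewhat larger than the $2^{m+1}/\epsilon$ of Lemma~\ref{sz-ring}, so that the effective per-coordinate deviation is of order $\epsilon/d$ rather than $\epsilon$.
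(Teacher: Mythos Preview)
Your proposal is essentially the paper's own proof: reduce to a local ring, pass to $S=R/(\hat I+pR)$ via Lemma~\ref{div}, count the at least $p^d/(2d)$ monic irreducibles in $\F_p[x]$ of degree $d$ against the at most $D/d$ that can divide $\hat f$ in $S[x]$, and observe (exactly as the paper does, taking $M>d\cdot 2^{m+1}/\epsilon$) that $M$ must be scaled up by a factor of $d$ to absorb the compounding across the $d$ coefficients. One small correction: $S$ is not asserted to be a field in Lemma~\ref{sz-ring}, only a ring containing a copy of $\F_p$, so the ``standard PID argument'' is unavailable---but your alternative route is fine, since the Bezout identity $aq_1+bq_2=1$ over $\F_p$ persists in $S$ and the coprime-monic-divisors-multiply argument works over any commutative ring.
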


\begin{proof} 
  Let $R \iso \bigoplus_i R_i$ is the local ring decomposition of $R$.
  As $f$ is nonzero in $R[x]$, there exists $j$ such that
  $f_j=\hat{\phi}_j(f)$ is nonzero in $R_j[x]$.  Clearly, we can lower
  bound the required probability by the probability that
  $q_j=\hat{\phi}_j(q)$ does not divide $f_j$ in $R_j[x]$. Let the
  characteristic of $R_j$ is $p^{\alpha}$.  If $q_j$ divides $f_j$ in
  $R_j[x]$, then it also divides over $R_j/(\hat{I}_j + pR_j)$. It is
  shown in the proof of the Lemma~\ref{sz-ring}, $\F_p \subset
  R_j/(\hat{I}_j + pR_j)$.  Now the number of irreducible polynomials
  in $\F_p$ of degree $d$ is at least $\frac{p^{d} - 2p^{d/2}}{d}$.
  Let $t = \frac{p^{d} - 2p^{d/2}}{d}$.  Let
  $\hat{q}(x)=\sum_{i=0}^{d-1} b_i x^i + x^d\in\F_p[x]$ be a monic
  polynomial. Now if a monic polynomial $P(x)$ of degree $d$ is
  randomly chosen from $U[x]$ then, $\Prob[P(x)\equiv
  \hat{q}(x)~\mod~p]= \frac{\prod_{i=0}^{d-1}\lfloor (M -
    b_i)/p\rfloor+1}{M^d} \geq \frac{1}{p^d}(1-\frac{2^{m}}{M})^{d}$.
  Again, choosing $M > d 2^{m+1}/\epsilon$, we get $\Prob[P(x)\equiv
  \hat{q}(x)~\mod~p]\geq (1-\epsilon/2)/p^d$.

  So, the probability that $q_j$ is an irreducible polynomial in
  $\F_p[x]$ is at least $t (1-\epsilon)/ p^d > (1 - \epsilon)/2d$.
  Let $f_j\in p^k R_j[x]\setminus p^{k+1} R_j[x] $.  So we can write
  $f_j = p^k f'$, where $f'\in R_j[x]\setminus pR_j[x]$. By the
  Lemma~\ref{div}, $q_j$ divides $f'$ in $R/(\hat{I}_j + pR)$.  Also,
  by the Lemma~\ref{crt}, the number of different monic polynomials
  that are irreducible in $\F_p$ and divides $f'$ in $R_j/(\hat{I}_j +
  pR_j)$ is at most $D/d$.  In the sample space for $q$, any monic
  polynomial of degree $d$ in $R_j/(\hat{I}_j + pR_j)$ occurs at most
  $(\frac{M}{p}+1)^d$ times. So the probability that a random monic
  irreducible polynomial $q$ will divide $f$ is at most
  $\frac{(D/d)(\frac{M}{p}+1)^d}{M^d} \leq
  \frac{D}{dp^d}(1+\frac{1}{d})^d < \frac{3D}{d2^d}$.  So a random
  monic polynomial $q\in U[x]$ (which is irreducible in $\F_p$ with
  reasonable probability) will not divide $f(x)$ with probability at
  least $\frac{1-\epsilon}{2d} - \frac{3D}{d p^d} >
  \frac{1-\epsilon}{4d}$ for $d \geq \lceil\log
  \frac{12D}{1-\epsilon}\rceil$.  
\end{proof}

The correctness of Algorithm 2 and its success probability follow
directly from Lemma~\ref{div}, Lemma~\ref{crt} and
Lemma~\ref{irrd-poly}. 

In particular, by Lemma~\ref{irrd-poly}, the
success probability of our algorithm is at least
$\frac{1-\epsilon}{4t}$, where
$t=\lceil\log\frac{12D}{1-\epsilon}\rceil$.  As
$\frac{1-\epsilon}{4t}$ is an inverse polynomial quantity in input
size and the randomized algorithm has one-sided error, we can boost
the success probability by repeating the test polynomially many times.
We summarize the result in the following theorem.

\begin{theorem}\label{univ-pit} 
  Let $R$ be a finite commutative ring with unity given as an oracle
  and $f\in R[x]$ be a polynomial, given as an arithmetic circuit.
  Then in randomized time polynomial in the circuit size and
  $\log~|R|$ we can test whether $f\equiv 0$ in $R[x]$.
\end{theorem}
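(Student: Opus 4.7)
The plan is to reduce the general case to univariate identity testing over $R[x]$ via Lemma~\ref{univ-subs}, and then run Algorithm~2 whose correctness rests on Lemmas~\ref{div}, \ref{crt}, and \ref{irrd-poly}. First I would substitute $x_i \mapsto x^{(d+1)^{i-1}}$ in the input circuit to obtain a (logically, not syntactically) univariate polynomial $g(x)$ of formal degree at most $D = d(d+1)^{n-1}$ with the property that $g \equiv 0$ iff $f \equiv 0$. Note $\log D = O(n \log d)$ is polynomial in the input size, which is what allows us to pick $q$ of only logarithmic degree.

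Next I would sample a random monic $q(x) \in U[x]$ of degree $d' = \lceil \log(12D/(1-\epsilon)) \rceil$, where $U = \{ c\cdot e : 0 \le c \le M\}$ for $M$ as in Section~\ref{sz-lemma}, and compute $g(x) \bmod q(x)$ by traversing the circuit bottom-up, storing at each gate the remainder modulo $q$ of the polynomial computed there. Since $q$ is monic, the division algorithm in $R[x]$ is well-defined and yields a unique remainder of degree $< d'$, even though $R$ is not required to be a UFD; hence each gate costs only $\poly(d')$ ring-oracle operations. The algorithm outputs ``$f \equiv 0$'' iff the root-gate remainder is the zero polynomial.

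For correctness: the ``$f \equiv 0$'' side is trivial, and if $f \not\equiv 0$ then $g \not\equiv 0$, so Lemma~\ref{irrd-poly} gives that $q$ fails to divide $g$ with probability at least $(1-\epsilon)/(4d')$, in which case the remainder is nonzero and the algorithm answers correctly. Since $d' = \poly(\log D, \log 1/\epsilon)$, this success probability is inverse polynomial; because the error is one-sided, $O(d' \log(1/\delta))$ independent repetitions boost it to $1 - \delta$. The overall running time is polynomial in $|C|$, $\log |R|$, $n$, $\log d$, and $\log(1/\epsilon)$, satisfying the theorem's bound after absorbing $n, d$ into the circuit size.

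The main obstacle is making sense of the gate-by-gate ``division'' over an arbitrary finite commutative ring, and controlling the probability that a random monic $q(x)$ divides $g(x)$ when $R$ has neither an integral-domain structure nor a useful Schwartz--Zippel estimate on its own. Monicity of $q$ handles the first issue. The second is precisely the content of Lemmas~\ref{div}--\ref{irrd-poly}: decompose $R$ into local factors, reduce one nonzero component modulo $\hat I + pR$ to land in a quotient containing $\F_p$, and then combine the $\F_p[x]$ irreducible-polynomial count with a chinese-remaindering bound on how many monic irreducibles of degree $d'$ can divide a nonzero degree-$D$ polynomial. The near-uniform sampling from $U$ (analyzed in Section~\ref{sz-lemma}) is exactly what lets us approximate the uniform distribution on monic polynomials in $\F_p[x]$ to within the $1-\epsilon/2$ factor required by Lemma~\ref{irrd-poly}.
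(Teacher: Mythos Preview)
Your argument is correct and matches the paper's approach almost exactly: run Algorithm~2 (random monic $q(x)\in U[x]$, gate-by-gate reduction modulo $q$ using monicity to make division well-defined), invoke Lemmas~\ref{div}, \ref{crt}, and \ref{irrd-poly} for correctness, and amplify the inverse-polynomial one-sided success probability by independent repetition.

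One small mismatch: the statement you are asked to prove is already univariate ($f\in R[x]$), so the reduction via Lemma~\ref{univ-subs} is not needed here---you have in effect written the proof of Theorem~\ref{mult-pit} rather than Theorem~\ref{univ-pit}. In the paper the logical order is reversed: Theorem~\ref{univ-pit} is proved first (directly from Algorithm~2 and the three lemmas, with $D$ simply the degree of the given univariate $f$), and then Theorem~\ref{mult-pit} follows by combining Theorem~\ref{univ-pit} with Lemma~\ref{univ-subs}. Setting $n=1$ in your write-up recovers the intended proof.
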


Randomized polynomial time identity testing for multivariate
polynomials $f\in R[x_1,\cdots,x_n]$ given by arithmetic circuits
follows from Theorem~\ref{univ-pit} and Lemma~\ref{univ-subs}.

\begin{theorem}\label{mult-pit}
  Let $R$ be a commutative ring with unity given as an oracle. Let $f$
  be a polynomial in $R[x_1,x_2,\cdots,x_n]$ of formal degree at most
  $d$, is given by an arithmetic circuit over $R$. Then in randomized
  time polynomial in circuit size and $\log |R|$ we can test whether
  $f\equiv 0$ in $R[x_1,x_2,\cdots,x_n]$.
\end{theorem}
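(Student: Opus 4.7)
The plan is to reduce the multivariate identity testing problem to the univariate case already handled by Theorem \ref{univ-pit}, using exactly the Kronecker-style substitution in Lemma \ref{univ-subs}. Concretely, I would take the input circuit $C$ computing $f \in R[x_1,\ldots,x_n]$ and transform it into a univariate circuit $C'$ over $R[x]$ that computes $g(x) = f(x, x^{d+1}, \ldots, x^{(d+1)^{n-1}})$. Lemma \ref{univ-subs} guarantees that $f \equiv 0$ in $R[x_1,\ldots,x_n]$ if and only if $g \equiv 0$ in $R[x]$, so running the univariate identity test of Theorem \ref{univ-pit} on $C'$ decides the multivariate problem with the same one-sided error.

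The first step of the actual construction is producing $C'$ efficiently. I would replace each input gate labeled $x_i$ in $C$ by a small subcircuit computing $x^{(d+1)^{i-1}}$ via repeated squaring; this subcircuit has size $O(n \log d)$, and sharing a single copy of the powers $x, x^{d+1}, x^{(d+1)^2}, \ldots$ across all occurrences keeps the total overhead $\poly(n, \log d)$. Thus $|C'| = |C| + \poly(n,\log d)$. The formal degree of $g$ is $D = d(d+1)^{n-1}$, which is exponential in $n$, but the critical observation is that $\log D = O(n \log d)$ is polynomial in the input size, which is exactly what Theorem \ref{univ-pit} needs.

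The second step is to invoke Theorem \ref{univ-pit} on $(C', \epsilon)$: pick a random monic $q(x) \in U[x]$ of degree $\lceil \log(12D/(1-\epsilon))\rceil = O(n\log d)$, walk through $C'$ gate by gate maintaining only the remainder modulo $q(x)$ at each wire (division by a monic polynomial is well defined over any ring), and declare $f\equiv 0$ iff the remainder at the output is zero. Every intermediate value is a polynomial of degree less than $\deg q$, every coefficient is a ring element, and every ring operation is done through the oracle; so each gate is processed in time $\poly(\deg q, \log |R|) = \poly(n, \log d, \log|R|)$, giving total running time polynomial in $|C|$, $n$, $\log d$, and $\log|R|$.

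Correctness and success probability follow directly from Theorem \ref{univ-pit} applied to $g$: when $g \not\equiv 0$, the random $q$ fails to divide $g$ with probability at least $(1-\epsilon)/(4\lceil \log(12D/(1-\epsilon))\rceil)$, and when $g \equiv 0$ the remainder is always $0$. This is inverse-polynomial in the input size, and since the test has one-sided error we amplify to any desired constant success probability by $\poly$-many independent repetitions. The only mild obstacle is bookkeeping the circuit blow-up under the Kronecker substitution and verifying that the univariate algorithm's dependence on $D$ is only logarithmic; both are routine once one notices that $\deg q$, not $D$ itself, controls the per-gate cost.
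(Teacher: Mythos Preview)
Your proposal is correct and follows exactly the paper's approach: the paper's proof of Theorem~\ref{mult-pit} is simply the observation that it follows from Theorem~\ref{univ-pit} combined with Lemma~\ref{univ-subs}, and Algorithm~2 already spells out the Kronecker substitution $x_i \mapsto x^{(d+1)^{i-1}}$, the bound $D = d(d+1)^{n-1}$, and the gate-by-gate reduction modulo a random monic $q(x)$ of degree $\lceil \log(12D/(1-\epsilon))\rceil$. Your write-up actually supplies more implementation detail (repeated squaring, the $O(n\log d)$ bound on $\log D$) than the paper bothers to record, but the argument is the same.
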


\begin{remark}
  The randomized polynomial-time identity test of Bogdanov and Wee
  \cite{BW05} for noncommutative circuits of polynomially bounded
  degree in $\F\{x_1,\cdots,x_n\}$ for a field $\F$, can be extended
  to such circuits over any commutative ring $R$ with unity, where $R$
  is given by a ring oracle. This follows from the fact that the
  Amitsur-Levitzki theorem is easily seen to hold even in the ring
  $R\{x_1,\cdots,x_n\}$. The easy details are given in the appendix.
\end{remark}

\begin{remark}
  Finally, we note that the results in Section \ref{dit} carry over
  without changes to noncommuting polynomials in
  $R\{x_1,\cdots,x_n\}$, where $R$ is a commutative ring with unity
  given by a ring oracle.
\end{remark}

\newpage

\appendix

\section{Noncommutative identity testing over commutative
  coefficient rings}\label{noncomm-pit}

Here we extend the noncommutative identity testing of Bogdanov and Wee
\cite{BW05} to over $R\{x_1,\cdots,x_n\}$ where $R$ is an arbitrary commutative
ring with unity. Our algorithm is a combination of
Amitsur-Levitzki's theorem and the Theorem~\ref{mult-pit}. We first
briefly discuss the Amitsur-Levitzki's result tailored to our
application and the result of \cite{BW05}. Let $M_k(\F)$ be the
$k\times k$ matrix algebra over $\F$. The following algebraic lemma
was the key result used in \cite{BW05}.

\begin{lemma}\label{AL}\cite{AL50,GZ05}
  $M_k(\F)$ does not satisfy any non-trivial polynomial identity of
  degree $< 2k$.
\end{lemma}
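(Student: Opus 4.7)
The plan is to prove this as the lower-bound direction of the classical Amitsur--Levitzki theorem, using the standard elementary-matrix approach. First I would reduce to the multilinear case: by the standard linearization argument (replacing occurrences of a repeated variable by sums over fresh copies), any nontrivial polynomial identity of degree $d$ for $M_k(\F)$ yields a nontrivial \emph{multilinear} polynomial identity of degree at most $d$. So it suffices to rule out multilinear identities $f(x_1,\ldots,x_d) = \sum_{\sigma \in S_d} c_\sigma \, x_{\sigma(1)} \cdots x_{\sigma(d)}$ with $d < 2k$ and some $c_\sigma \neq 0$, by exhibiting a matrix substitution on which $f$ does not vanish.

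Next I would evaluate $f$ on elementary matrices $x_l = E_{a_l,b_l}$ with $a_l,b_l \in \{1,\ldots,k\}$ to be chosen. For any $\tau \in S_d$, the product $x_{\tau(1)} \cdots x_{\tau(d)}$ equals $E_{a_{\tau(1)},\,b_{\tau(d)}}$ when $b_{\tau(l)} = a_{\tau(l+1)}$ for all $l < d$, and is $0$ otherwise. Viewing each $x_l$ as a directed edge from $a_l$ to $b_l$ in a multigraph $G$ on $\{1,\ldots,k\}$, the nonvanishing $\tau$'s are exactly the Eulerian orderings of the edges of $G$, and the $(i,j)$ entry of $f$ evaluated at this substitution becomes the sum $\sum_\tau c_\tau$ over those Eulerian orderings starting with an edge out of $i$ and ending with an edge into $j$. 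In the easy range $d < k$, we can pick $d+1$ distinct vertices and the path $E_{1,2}, E_{2,3}, \ldots, E_{d,d+1}$, for which the identity ordering is the unique Eulerian one, giving $f = c_{\mathrm{id}}\,E_{1,d+1} \neq 0$.

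The combinatorial heart of the argument, and the main obstacle, is the range $k \le d < 2k$, where every choice of $d$ edges on $k$ vertices has repeated vertices and hence typically multiple Eulerian orderings, so cancellations among the $c_\tau$ become possible. One handles this by considering many different edge multisets (for instance, cyclic or zig-zag patterns on $\{1,\ldots,k\}$) and showing that, were the Eulerian-ordering coefficient sums $\sum_\tau c_\tau$ to vanish in every output entry for every such multiset, all $c_\sigma$ would be forced to zero, contradicting nontriviality. The detailed counting that makes this work precisely up to $d = 2k - 1$ is what makes $2k$ the sharp threshold and is the content of the classical Amitsur--Levitzki lower bound, carried out in \cite{AL50,GZ05}; in the present appendix I would simply invoke their treatment.
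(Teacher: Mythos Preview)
Your broad approach matches the paper's: reduce to multilinear identities, then substitute elementary matrices and read off the result as a sum over Eulerian orderings of a directed multigraph. Where you diverge is in the handling of the range $k \le d < 2k$, which you describe as the ``combinatorial heart'' and ``main obstacle'' and then defer to the literature. In fact the paper (following \cite{GZ05}) disposes of this range with two short tricks you have overlooked.

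First, once $f$ is multilinear of degree $d < 2k$, multiply on the right by fresh variables $x_{d+1},\ldots,x_{2k-1}$; the resulting polynomial is still a multilinear identity for $M_k(\F)$ and has degree exactly $2k-1$. So one may assume $d = 2k-1$ from the outset and normalize so that the coefficient $\alpha_1$ of $x_1x_2\cdots x_{2k-1}$ is nonzero. Second, there is a \emph{single} substitution of $2k-1$ elementary matrices whose associated multigraph has a \emph{unique} Eulerian path, namely the ``staircase''
\[
x_1 = e_{11},\ x_2 = e_{12},\ x_3 = e_{22},\ x_4 = e_{23},\ \ldots,\ x_{2k-2} = e_{k-1,k},\ x_{2k-1} = e_{kk}.
\]
At each vertex $i$ the loop $e_{ii}$ must be traversed before the forward edge $e_{i,i+1}$ (otherwise the loop becomes unreachable), so the identity ordering is the only one that survives, and $f$ evaluates to $\alpha_1 e_{1k} \neq 0$. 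No family of substitutions, no cancellation analysis, and no appeal to ``detailed counting'' is needed. Your proposal is not wrong, but by missing these two moves you turn a three-line argument into an open-ended one that ultimately cites the very result you are trying to prove.
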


Based on Lemma~\ref{AL}, a noncommutative version of the
Schwartz-Zippel lemma over $\F\{x_1,\cdots,x_n\}$ is described in
\cite{BW05}. We first give an intuitive description of the identity
testing algorithm in \cite{BW05}. Assume $f\in\F\{x_1,\cdots,x_n\}$ is
of degree $d$ and is given by an arithmetic circuit.  Fix $k$ such
that $k>\lceil d/2\rceil$.  Consider a field extension $\F'$ of $\F$
such that $|\F'| >> d$. The idea is to evaluate the circuit on random
$k\times k$ matrices from $M_k(\F')$.  We think each entry of the
matrix as an indeterminate and view the $k^2$ indeterminates as
commuting variables. So at the output of the circuit, we get a
$k\times k$ matrix such that each of its entries are polynomials in
commuting variables. Lemma~\ref{AL} guarantees that $f\equiv 0$ in
$\F\{x_1,\cdots,x_n\}$ if and only if each of the $k^2$ polynomials
computed as the entries of the matrix at the output gate, are
identically zero.  Then we get a lower bound of the success
probability via commutative Schwartz-Zippel lemma.

We give a randomized polynomial time identity testing algorithm over
$R\{x_1,\cdots,x_n\}$ where $R$ is any finite commutative ring with unity and
is given by a ring oracle. Our algorithm is based on the observation
that Lemma~\ref{AL} is valid over $M_k(R)$. For the sake of
completeness, we briefly discuss the proof of the Lemma~\ref{AL}
tailored to $R$. The following fact is the key in proving the
Lemma~\ref{AL}.

\begin{fact}{\rm\cite[page 7]{GZ05}}\label{m-identity}
  Let $A$ be an $\F$-algebra spanned by a set $B$ over $\F$. If the
  algebra $A$ satisfies an identity of degree $k$ in
  $\F\{x_1,\cdots,x_n\}$,
  then it satisfies a multilinear identity of degree $\leq k$.
\end{fact}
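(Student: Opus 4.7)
I would prove Fact~\ref{m-identity} by the standard \emph{partial linearization} technique from PI theory, which works for any field $\F$ with no hypothesis on characteristic or cardinality. Let $f(x_1,\ldots,x_n) \in \F\{x_1,\ldots,x_n\}$ be an identity of $A$ with $\deg f \le k$. If $f$ is already multilinear we are done, so suppose some variable---say $x_1$---appears with degree $\ge 2$ in some monomial of $f$. Introduce a fresh variable $y$ and set
\[
h(x_1,y,x_2,\ldots,x_n) \;=\; f(x_1+y,\,x_2,\ldots,x_n) \;-\; f(x_1,x_2,\ldots,x_n) \;-\; f(y,x_2,\ldots,x_n).
\]
It is immediate that (i) $h$ is an identity of $A$, since each of the three summands vanishes on every substitution from $A$; (ii) $\deg h \le \deg f \le k$, since $x_1 \mapsto x_1+y$ preserves total degree and subtraction does not increase it; and (iii) both the $x_1$-degree and the $y$-degree of $h$ are strictly less than the original $x_1$-degree of $f$, because the pure-$x_1$ and pure-$y$ expansions (which reproduce $f(x_1,\ldots)$ and $f(y,\ldots)$) are exactly what the two subtractions remove, leaving only ``mixed'' monomials in which both $x_1$ and $y$ appear.

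The crux of the argument---and the step requiring most care---is verifying that $h$ is not the zero polynomial. I would argue this via a uniqueness observation in the free noncommutative algebra $\F\{x_1,y,x_2,\ldots,x_n\}$: given any mixed monomial $m'$ that appears in the expansion of $f(x_1+y,\ldots)$, replacing every occurrence of $y$ in $m'$ by $x_1$ recovers a \emph{unique} monomial $m_\alpha$ of $f$. Hence mixed monomials arising from distinct terms of $f$ cannot cancel each other, and within a single monomial $m_\alpha$ of $x_1$-degree $d_\alpha \ge 2$ the $2^{d_\alpha}-2$ mixed expansions are themselves distinct noncommutative words. Consequently every $m_\alpha$ with $d_\alpha \ge 2$ and coefficient $c_\alpha \ne 0$ contributes at least one mixed monomial with coefficient $c_\alpha$ to $h$, so $h \not\equiv 0$.

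The proof is completed by iteration. Define $\Phi(g) = \sum_i \max(\deg_{x_i}(g) - 1,\,0)$, a non-negative integer that measures how far $g$ is from being multilinear. The construction above produces $h$ with $\Phi(h) < \Phi(f)$ while maintaining $\deg h \le k$, so iterating terminates in finitely many steps and yields a multilinear identity of $A$ of degree at most $k$, as required. I note that the spanning set $B$ plays no role in the linearization itself; it becomes relevant only afterwards in the proof of Lemma~\ref{AL}, where multilinearity permits verifying the identity on $B$ rather than on all of $A$.
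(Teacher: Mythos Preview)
Your partial-linearization approach is exactly the standard one from PI theory and is what the cited reference \cite[Theorem 1.3.7]{GZ05} does; the paper itself does not reproduce the proof but simply points to that source. Steps (i)--(iii) and your nonvanishing argument for $h$ are correct. However, your termination measure $\Phi(g)=\sum_i\max(\deg_{x_i}(g)-1,0)$ is wrong: take $f=x_1^4$, so $\Phi(f)=3$, whereas $h=(x_1+y)^4-x_1^4-y^4$ has $\deg_{x_1}(h)=\deg_y(h)=3$ and hence $\Phi(h)=2+2=4>\Phi(f)$. In general, your step trades one contribution of $d-1$ for two contributions each as large as $d-2$, and $2(d-2)\ge d-1$ once $d\ge 3$, so $\Phi$ can stagnate or even increase.

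The iteration does terminate, but a different measure is needed. One clean choice is the Dershowitz--Manna multiset ordering on $\{\deg_{x_i}(g)\}_i$: removing an element $d\ge 2$ and inserting finitely many elements each at most $d-1$ is always a strict decrease, and this ordering is well-founded. If you prefer a scalar measure, any strictly convex one works; for instance $\Psi(g)=\sum_i\bigl(2^{\deg_{x_i}(g)}-\deg_{x_i}(g)-1\bigr)$ satisfies $2\bigl(2^{d-1}-(d-1)-1\bigr)=2^d-2d<2^d-d-1$ for all $d\ge 2$, so $\Psi(h)<\Psi(f)$. With this correction your argument goes through.
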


We observe that the result of the Fact~\ref{m-identity} holds, even if
$A$ be an algebra over $R$. Proof is analogous to the proof of the
Fact~\ref{m-identity}. Following \cite[page 7]{GZ05}, we call a
polynomial $f$ \emph{multilinear} if every variable occurs with degree
exactly one in every monomial of $f$.
 
\begin{lemma}\label{m-ring}
  Let $A$ be an $R$-algebra such that $A$ satisfies an identity of
  degree $k$.  Then it satisfies a multilinear identity of degree $k$.
\end{lemma}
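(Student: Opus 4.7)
The plan is to mimic the proof of Fact~\ref{m-identity} via iterated polarization (linearization). The only point that requires attention is that the polarization step works over an arbitrary commutative ring $R$ in the noncommutative setting, without any characteristic hypothesis on $R$.

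Suppose $A$ satisfies an identity $f \in R\{x_1, \ldots, x_n\}$ of degree $k$. If $f$ is not multilinear, some variable, say $x_1$, has degree $d \geq 2$ in $f$. Introduce a fresh variable $y$ and form
\[
g(x_1, y, x_2, \ldots, x_n) = f(x_1 + y, x_2, \ldots, x_n) - f(x_1, \ldots, x_n) - f(y, x_2, \ldots, x_n).
\]
Then $g$ is also an identity of $A$, as a linear combination of identities under substitutions, and its $x_1$-degree and $y$-degree are both at most $d - 1$.

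The crucial point is that $g$ is a \emph{nonzero} element of $R\{x_1, y, x_2, \ldots, x_n\}$. Indeed, for any monomial $m$ of $f$, the substitution $x_1 \mapsto x_1 + y$ expands $m$ into a sum of pairwise distinct noncommutative monomials, each appearing with coefficient equal to the coefficient of $m$ in $f$, obtained by choosing, at every occurrence of $x_1$ in $m$, whether to retain $x_1$ or replace it by $y$. No binomial coefficients appear, in contrast to the commutative case, so no cancellation can occur modulo the characteristic of $R$. Moreover, polarized monomials coming from different monomials of $f$ are distinct, since the original $m$ can be recovered by replacing every $y$ by $x_1$. Hence the ``mixed'' polarized monomials, which are exactly the monomials of $g$, do not cancel one another, and $g \neq 0$ whenever $f$ has some monomial with $x_1$-degree at least $2$.

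Iterating, at each stage we polarize a variable $x_i$ achieving the maximum degree $\nu(f) = \max_j \deg_{x_j}(f)$. The lexicographic pair $(\nu(f), c(f))$, where $c(f)$ is the number of variables attaining this maximum, strictly decreases at each step (either $c$ drops by one while $\nu$ stays the same, or $\nu$ drops), so the process terminates with a multilinear identity. The main obstacle is ensuring that the total degree is maintained at $k$ throughout: by choosing, at each stage, a variable of maximum degree that actually appears with degree at least $2$ in some monomial of maximal total degree of the current polynomial, the polarization step preserves the existence of a monomial of total degree $k$, so the final multilinear identity has degree $k$. This sidesteps the extraction of multihomogeneous components used in the field-theoretic proof of Fact~\ref{m-identity}, which would require $R$ to contain a sufficiently large integral domain.
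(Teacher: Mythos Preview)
Your polarization argument---the nonzeroness of $g$ over an arbitrary commutative ring $R$ (no binomial coefficients, no cancellation) and the lexicographic termination---is correct and is exactly the standard linearization the paper defers to (Theorem~1.3.7 of \cite{GZ05}). That part of the proposal is fine and matches the paper's approach.

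The gap is in your degree-preservation claim. You assert that one can always pick a variable that (a) attains $\nu(f)$ and (b) occurs with degree $\geq 2$ in some monomial of \emph{maximal} total degree. This can fail: take $f = x_1x_2x_3 + x_1^2$ (degree $3$). The only variable of degree $\geq 2$ is $x_1$, but in the unique top-degree monomial $x_1x_2x_3$ it has degree $1$. Polarizing $x_1$ then kills $x_1x_2x_3$ (the $d_m = 1$ case in your own analysis) and leaves $g = x_1y + yx_1$, of degree $2$. So iterated polarization as you describe it yields only a multilinear identity of degree $\leq k$---which is precisely what Fact~\ref{m-identity} states and is all the paper's application needs (the subsequent argument pads with fresh variables on the right anyway). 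If you want degree exactly $k$ without extracting homogeneous components, fix a top-degree monomial $m = x_{i_1}\cdots x_{i_k}$ of $f$, substitute $x_j \mapsto \sum_{l:\,i_l = j} y_l$ for $j$ occurring in $m$ and $x_j \mapsto 0$ otherwise, and apply inclusion--exclusion over subsets of $\{y_1,\ldots,y_k\}$; the resulting polynomial is an $R$-linear combination of specializations of $f$, is multilinear in $y_1,\ldots,y_k$, and its coefficient on $y_1\cdots y_k$ is the (nonzero) coefficient of $m$ in $f$.
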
 

\begin{proof} 
 The lemma follows from an identical argument to that in the 
 proof of Theorem 1.3.7 in \cite{GZ05}.
\end{proof}

Using Lemma~\ref{m-ring}, it follows that Lemma~\ref{AL} extends to
$M_k(R)$. The proof is analogous to the proof of Theorem 1.7.2 in
\cite{GZ05}. Let $f$ be an identity for $M_k(R)$ of degree $< 2k$. By
the Lemma~\ref{m-ring}, we can assume that $f$ is multilinear.  Also,
multiplying $f$ by the new variables from the right, we can assume
that the degree of $f$ is $2k - 1$. Let,
\[ 
f(x_1,x_2,\cdots,x_{2k-1}) = \sum_{\sigma\in S_{2k-1}} \alpha_{\sigma}
x_{\sigma(1)} \cdots x_{\sigma(2k-1)}
\]   
with $\alpha_1 \neq 0$, where $1$ denotes the identity permutation.
Let $e_{ij}$ be the $k\times k$ matrix with unity (of $R$) at the
$(i,j)$-th entry and zero in all other entries. It is easy to see that
$f(e_{11},e_{12},e_{22},e_{23}, \cdots, e_{k-1,k}, e_{k k}) = \alpha_1
e_{1 k} \neq 0$, since $x_1\cdots x_{2k - 1}$ is the only monomial
that does not vanish during the evaluation. So $f$ is not an identity
for $M_k(R)$. The fact that $R$ is a ring with unity is crucially
used.

\begin{lemma}\label{AL-1}
  Let $R$ be a finite commutative ring with unity. Then $M_k(\R)$ does
  not satisfy any polynomial identity of degree $< 2k$.
\end{lemma}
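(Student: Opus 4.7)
The plan is to prove the contrapositive: assuming $M_k(R)$ satisfies some nontrivial polynomial identity of degree $<2k$, I will construct matrices at which this identity evaluates to a nonzero matrix, giving a contradiction.

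First, I would invoke Lemma~\ref{m-ring} to reduce to the multilinear case: if $M_k(R)$ satisfies any nontrivial identity of degree $<2k$, then it satisfies a \emph{multilinear} identity of degree at most $2k-1$. Since multiplying a nontrivial multilinear identity on the right by fresh variables yields another nontrivial multilinear identity of higher degree, I may assume the degree is exactly $2k-1$ and write
\[
f(x_1,\ldots,x_{2k-1}) \;=\; \sum_{\sigma \in S_{2k-1}} \alpha_\sigma\, x_{\sigma(1)}\cdots x_{\sigma(2k-1)},
\]
with each $\alpha_\sigma \in R$. By relabeling the variables, I may arrange that $\alpha_1 \neq 0$, where $1$ denotes the identity permutation.

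Next, I would plug in the sequence of matrix units
\[
(x_1,x_2,x_3,x_4,\ldots,x_{2k-2},x_{2k-1}) \;=\; (e_{11},e_{12},e_{22},e_{23},\ldots,e_{k-1,k},e_{kk}),
\]
alternating diagonal and super-diagonal units, which is well-defined because $R$ has $1$. The key combinatorial fact is that with this choice, the product $x_{\sigma(1)}\cdots x_{\sigma(2k-1)}$ vanishes for every non-identity $\sigma$: using $e_{ij}e_{rs} = \delta_{jr}\, e_{is}$, the requirement that consecutive factors ``chain'' forces the variables to appear in the specified order. Hence only the identity permutation survives, and the evaluation equals $\alpha_1 e_{1k}$, which is nonzero in $M_k(R)$ since $\alpha_1 \in R$ is nonzero and $R$ has unity. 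This contradicts $f$ being an identity for $M_k(R)$.

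The main obstacle is the verification that the identity permutation is the unique survivor under the above substitution. Concretely, I would argue by induction on the position: for the product to be nonzero the first factor must be of the form $e_{1j}$, which forces $x_1 = e_{11}$ to appear first; propagating this constraint through the product fixes $\sigma$ to be the identity. Everything else (the reduction to multilinear identities and the padding to degree $2k-1$) follows Lemma~\ref{m-ring} and a straightforward argument, so the combinatorial chaining step is where the real content lies. Once established, the lemma follows immediately.
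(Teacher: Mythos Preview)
Your proposal is correct and follows essentially the same argument as the paper: reduce to a multilinear identity via Lemma~\ref{m-ring}, pad to degree $2k-1$, write $f$ as a sum over $S_{2k-1}$ with $\alpha_1\neq 0$, substitute the staircase of matrix units $e_{11},e_{12},e_{22},\ldots,e_{k-1,k},e_{kk}$, and observe that only the identity permutation survives to give $\alpha_1 e_{1k}\neq 0$. Your sketch of the chaining argument is slightly imprecise (the reason $e_{11}$ must come first is that it is the only matrix with column index $1$, so nothing can precede it), but the intended induction is clear and matches the paper's ``it is easy to see'' step.
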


Now we a randomized polynomial time identity testing algorithm over 
$R\{x_1,\cdots,x_n\}$.

\begin{theorem}\label{noncomm-ring}
Let $f\in R\{x_1,\cdots,x_n\}$ be a polynomial of degree $d$, given by a
noncommutative arithmetic circuit $C$. $R$ is given as a ring oracle and its 
elements are encoded using binary strings of length $m$. Then there 
is a randomized polynomial time algorithm (\poly(n,d,m)) to test 
if $f\equiv 0$ over $R\{x_1,\cdots,x_n\}$. 

\end{theorem}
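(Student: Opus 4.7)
The plan is to mimic the Bogdanov--Wee approach, but replace their use of the Schwartz--Zippel lemma over a field with Theorem~\ref{mult-pit} (the randomized identity test over a finite commutative ring oracle), using Lemma~\ref{AL-1} as the algebraic backbone. First I would set $k = \lceil d/2 \rceil + 1$, so that $2k > d$, and introduce a fresh set of commuting indeterminates $\{y_{i,j,\ell} : 1 \le i \le n,\ 1 \le j,\ell \le k\}$. For each noncommuting variable $x_i$ I would substitute the generic matrix $X_i = (y_{i,j,\ell})_{j,\ell} \in M_k(R[\bar y])$. Evaluating the noncommutative circuit $C$ on these matrices is straightforward: each addition gate becomes entrywise addition of $k \times k$ matrices of polynomials in $R[\bar y]$, and each multiplication gate becomes matrix multiplication, which is expressible by $O(k^3)$ commutative additions and multiplications. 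Hence the output is a $k \times k$ matrix $F = (F_{j\ell})$ whose entries lie in $R[\bar y]$, and each $F_{j\ell}$ is computed by a commutative arithmetic circuit of size $\poly(|C|, k) = \poly(|C|, d)$, with formal degree at most $d$.

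Next I would justify the reduction: I claim that $f \equiv 0$ in $R\{x_1,\ldots,x_n\}$ if and only if every entry polynomial $F_{j\ell} \equiv 0$ in $R[\bar y]$. One direction is immediate. For the converse, suppose $f \not\equiv 0$. If every $F_{j\ell}$ were the zero polynomial in $R[\bar y]$, then $f(X_1,\ldots,X_n) = 0$ as a matrix over $R[\bar y]$, which means $f$ would be a polynomial identity of degree $\le d < 2k$ for the matrix algebra generated by generic $k \times k$ matrices, and hence for $M_k(R)$ itself (by specialization of the $y_{i,j,\ell}$ to elements of $R$). This contradicts Lemma~\ref{AL-1}. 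So at least one $F_{j\ell}$ is a nonzero polynomial in $R[\bar y]$.

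Then I would run the randomized identity test of Theorem~\ref{mult-pit} on each of the $k^2$ entry circuits $F_{j\ell}$ in turn, and declare $f \equiv 0$ iff every run declares $F_{j\ell} \equiv 0$. Each invocation takes time polynomial in $|C|$, $k$, and $\log|R| = m$, so the total running time is $\poly(n, d, |C|, m)$. One-sided error can be amplified by standard repetition, since Theorem~\ref{mult-pit} guarantees inverse-polynomial success probability with one-sided error.

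The only delicate point, and what I expect to be the main thing to spell out, is the algebraic step that reduces "$f$ is an identity on generic matrices over $R[\bar y]$" to "$f$ is an identity for $M_k(R)$." This is a standard specialization argument: any assignment of the $y_{i,j,\ell}$ to elements of $R$ turns a zero polynomial in $R[\bar y]$ into $0 \in R$, so the vanishing of all $F_{j\ell}$ over $R[\bar y]$ forces $f(A_1,\ldots,A_n) = 0$ for every tuple $(A_1,\ldots,A_n) \in M_k(R)^n$, contradicting Lemma~\ref{AL-1}. Everything else is bookkeeping: tracking circuit sizes under the matrix substitution and invoking Theorem~\ref{mult-pit} as a black box.
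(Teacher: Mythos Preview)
Your proposal is correct and follows essentially the same approach as the paper: set $k=\lceil d/2\rceil+1$, substitute generic $k\times k$ matrices of fresh commuting indeterminates for the $x_i$, use Lemma~\ref{AL-1} to argue that $f\equiv 0$ iff all $k^2$ entry polynomials vanish in $R[\bar y]$, and then invoke Theorem~\ref{mult-pit} on each entry circuit. If anything, you spell out the specialization step (from ``identity on generic matrices'' to ``identity for $M_k(R)$'') more carefully than the paper does.
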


\begin{proof}
  Let $x_1,x_2,\cdots,x_n$ are the indeterminates in $C$. Choose
  $k=\lceil d/2\rceil + 1$. Replace each $x_i$ by a $k\times k$ matrix
  over the set of indeterminates $\{y^{(i)}_{j\ell}\}_{1\leq
    j,\ell\leq k}$. Once we replace $x_i$ by matrices , the inputs and
  the outputs of the gates will be matrices. Replace each addition
  (multiplication) gate by a block of circuits computing the sum
  (product) of two $k\times k$ matrices (without loss of generality,
  assume that the fan-in of all gates is two). This can be easily
  achieved using $O(k^2)$ gates. Let $\hat{C}$ be the arithmetic
  circuit obtained from $C$ by these modifications. Clearly, $\hat{C}$
  computes a function from $\F^{nk^2}\rightarrow \F^{k^2}$ and the
  size of $\hat{C}$ is only polynomial in the size of $C$. Denote by
  $\bar{Y}$ the variable list $(y^{(1)}_{11}, \cdots, y^{(1)}_{kk}, \cdots,
  y^{(n)}_{11}, \cdots, y^{(n)}_{kk})$. Then,
\[
\hat{C}(\bar{Y})=(P_1(\bar{Y}), \cdots, P_{k^2}(\bar{Y})). 
\]
Also, by the Lemma~\ref{AL-1}, $M_{k}(R)$ does not satisfy any
identity of degree $< 2k$ over $R\{x_1,\cdots,x_n\}$. So $f$ satisfies
$M_k(R)$ if and only if $f\equiv 0$ in $R\{x_1,\cdots,x_n\}$, which
equivalently implies that $P_i\equiv 0$ over $R[\bar{Y}]$ for all $i$.
Notice that the degree of $P_i$ is $\leq d$. Now we appeal to the
Theorem~\ref{mult-pit} in order to test whether $P_i\equiv 0$ in time
$\poly(n,d,m)$. 
\end{proof}

Bogdanov and Wee in \cite{BW05} evaluate the noncommutative circuit
over a field extension $\F'$ of $\F$ in case $\F$ is a small field
compared to the degree. In our proof of Theorem~\ref{noncomm-ring},
when coefficients come from the ring $R$, we avoid working in a ring
extension and instead apply Theorem~\ref{mult-pit}.

\section{Alternative proof of Lemma~\ref{irrd-poly}}

Let $R$ be a finite commutative ring with unity (denoted $e$)
and its elements uniformly encoded in $\{0,1\}^m$. 

Recall we need to show the following: if we divide a nonzero
polynomial $g(x)\in R[x]$ of degree $D$ by a random monic polynomial
$q(x)\in U[x]$ of degree $\log O(D)$ then with high probability we get
a nonzero remainder. Recall from Section~\ref{pit-ring} that $U=\{ke\mid 0\leq
k\leq M-1\}$, where $M>2^{m+1}/\epsilon$.

Indeed, Agrawal and Biswas essentially show in \cite[Lemma 4.7]{AB03}
that the above result holds for the special case when the ring $R$ is
the ring $\mathbb{Z}_t$ of integers modulo $t$, where $t$ is any
positive integer given in binary. In Section~\ref{pit-ring} we gave a
self-contained proof of Lemma~\ref{irrd-poly}. In the sequel we give a
different proof which applies the \cite{AB03} result for
$\mathbb{Z}_t$ and brings out an interesting property of the division
algorithm.

Let $n$ denote the characteristic of the ring $R$. Then sampling from
$U[x]$ amounts to almost uniform sampling from the copy of $\Z_n[x]$,
namely $\Z_ne[x]$, contained in $R[x]$ as a subring. Since $(R,+)$ is
a finite abelian group, by the fundamental theorem for abelian groups,
we can write $(R,+)$ as a direct sum $R=\bigoplus_{i=1}^k \Z_{n_i}e_i$,
where $e_1,\cdots,e_k$ forms an independent generating set for $(R,+)$,
and $n_i$ is the additive order of $e_i$ for each $i$. Notice that
the lcm of $n_1,\cdots,n_k$ is the ring's characteristic $n$. This
decomposition extends naturally to the additive group $(R[x],+)$ to
give 
\begin{equation}
\label{add_iso_eqn}
R[x]=\bigoplus_{i=1}^k \Z_{n_i}[x]e_i.
\end{equation}
Thus, every polynomial $g(x)\in R[x]$ can be uniquely written as
$g(x)=\sum_{i=1}g_i(x)e_i$, where $g_i$ is a polynomial with
integer coefficients in the range $0,\cdots,n_i-1$ for each $i$. 
Clearly, dividing $g(x)$ by $q(x)$ amounts to dividing each term
in $\sum_{i=1}g_i(x)e_i$. The following claim tells us how to
analyze this term by term division. More precisely, we analyze
the quotient and remainder when we divide $g_i(x)e_i\in R[x]$ 
by $q(x)\in \Z_n[x]$ ($\iso Z_ne[x]\subseteq R[x]$). 

\begin{claim}
Let $g_i(x)=q(x)q'(x)+r'(x)$ be the quotient and remainder
when we divide $g_i(x)$ by $q(x)$ in the ring $\Z_{n_i}[x]$.
Let $g_i(x)e_i=q(x)q''(x)+r''(x)$ be the quotient and
remainder when we divide $g_i(x)e_i$ by $q(x)$ in the ring
$R[x]$. Then $q'(x)e_i=q''(x)$ and $r'(x)e_i=r''(x)$.
\end{claim}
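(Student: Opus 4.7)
\medskip

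The overall strategy is to lift the given division equation from $\Z_{n_i}[x]$ to $R[x]$ via multiplication by $e_i$, show that the lifted equation is itself a valid division of $g_i(x)e_i$ by $q(x)$ in $R[x]$, and then invoke uniqueness of division by a monic polynomial. Since $R$ is a commutative ring and $q(x)$ is monic, the standard polynomial division algorithm produces a unique quotient and remainder in $R[x]$, so once we exhibit one decomposition matching the degree condition $\deg r'' < \deg q$, we are done.

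The key compatibility lemma I would verify first is the following: for any integer $a$ and any $b e_i \in \Z_{n_i} e_i \subseteq R$, we have $(a e)(b e_i) = (ab) e_i$ in $R$ (using $e \cdot e_i = e_i$), and moreover $(ab) e_i = ((ab) \bmod n_i)\, e_i$ because $e_i$ has additive order $n_i$. This means the additive subgroup $\Z_{n_i}e_i$ is closed under multiplication by $\Z_n e$, and the resulting $\Z_n e$-action on $\Z_{n_i}e_i$ factors through reduction modulo $n_i$. Extending coefficient-wise, the map $\psi_i : \Z_{n_i}[x] \to R[x]$, $p(x) \mapsto p(x) e_i$, is a group homomorphism satisfying $\psi_i(\bar q(x)\, p(x)) = q(x)\cdot \psi_i(p(x))$, where $\bar q(x)$ denotes $q(x)$ with coefficients reduced modulo $n_i$ (which is well defined since $n_i$ divides the characteristic $n$).

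With this in hand, I start from $g_i(x) = \bar q(x) q'(x) + r'(x)$ in $\Z_{n_i}[x]$ with $\deg r' < \deg q$, and apply $\psi_i$ to both sides, obtaining
\[
g_i(x) e_i \;=\; q(x)\cdot\bigl(q'(x) e_i\bigr) \;+\; r'(x) e_i \quad \text{in } R[x].
\]
The degree of $r'(x) e_i$ is at most $\deg r' < \deg q$, so this is a valid division of $g_i(x)e_i$ by the monic polynomial $q(x)$ in $R[x]$. By uniqueness of division by $q(x)$, this decomposition must agree with $g_i(x)e_i = q(x) q''(x) + r''(x)$, giving $q''(x) = q'(x)e_i$ and $r''(x) = r'(x)e_i$, as claimed.

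The only real subtlety — and hence the main thing to nail down carefully — is the compatibility lemma in the second paragraph, because we are mixing the external integer coefficient ring $\Z_{n_i}$ with the internal ring structure of $R$. Everything else is routine, since $q$ being monic makes the division algorithm and its uniqueness work identically in $\Z_{n_i}[x]$ and $R[x]$, regardless of whether the coefficient ring is a UFD.
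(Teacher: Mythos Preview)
Your proof is correct and uses essentially the same ingredients as the paper's: the key fact that $n_ie_i=0$ (so multiplication by $e_i$ is compatible with reduction modulo $n_i$), together with uniqueness of division by the monic polynomial $q(x)$. The only difference is organizational: the paper first performs the division of $g_i(x)$ by $q(x)$ in $\Z_n[x]\subseteq R[x]$, then reduces modulo $n_i$ to identify it with the $\Z_{n_i}[x]$ division, and finally multiplies by $e_i$, whereas you go directly from the $\Z_{n_i}[x]$ division to $R[x]$ via the map $\psi_i$; your route is a bit more streamlined but not conceptually different.
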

This claim is somewhat surprising because Equation \ref{add_iso_eqn}
only gives us a \emph{group} decomposition of $R[x]$ and not a
\emph{ring} decomposition. Thus, it is not clear why division in the
ring $\Z_{n_i}[x]$ can be related to division in $R[x]$. Indeed, the
crucial reason why we can relate the two divisions is because the
divisor polynomial $q(x)$ lives in the copy of $\Z_n[x]$ inside
$R[x]$.

To see the claim, we will carry out the division of $g_i(x)$ by $q(x)$
over $R[x]$. Since both $g_i$ and $q(x)$ have integer coefficients,
this amounts to carrying out division in $\Z_n[x]$ which yields, say,
$g_i(x)=q(x)q_1(x)+r_1(x)$. We can also write $q_1(x)=a(x)+n_ib(x)$
and $r_1(x)=c(x)+n_id(x)$. Then, over $\Z_{n_i}$, notice that we must
have $g_i(x)=q(x)a(x)+c(x)$. Therefore, $a(x)=q'(x)$ and $c(x)=r'(x)$.
Now, multiplying both sides by $e_i$ we will get
$q_1(x)e_i=a(x)e_i+n_ie_ib(x)=a(x)e_i=q'(x)e_i$. Similarly, we get
$r_1(x)e_i=c(x)e_i=r'(x)e_i$. Furthermore, again multiplying both
sides by $e_i$, we also get $g_i(x)e_i=q(x)q_1(x)e_i+r_1(x)e_i$. Hence,
$q''(x)=q_1(x)e_i=q'(x)e_i$ and $r''(x)=r_1(x)e_i=r'(x)e_i$. This
proves the claim.

A consequence of the claim is the following nice property of the
division algorithm: in order to divide $g(x)$ by $q(x)$ over the ring
$R$, for each $i$ we can carry out the division of $g_i(x)$ by $q(x)$
over the ring $\Z_{n_i}$ and obtain the quotients and remainders:

\[
 g_i(x)=q(x)q'_i(x)+r'_i(x).
\]
Then we can put together the term by term divisions to obtain
\begin{eqnarray}\label{eqn1}
g(x)=q(x)(\sum_{i=1}^k q'_i(x)e_i)+(\sum_{i=1}^k r'_i(x)e_i).
\end{eqnarray}
More precisely, when we divide $g(x)$ by $q(x)$ in $R[x]$, the
quotient is $\sum_{i=1}^k q'_i(x)e_i$ and the remainder is
$\sum_{i=1}^k r'_i(x)e_i$. Now, since $g\in R[x]$ is nonzero, there is
an index $j$ such that $g_j[x]\in \Z_{n_j}[x]$ is nonzero.
Furthermore, since $n_j$ is a factor of $n$, the polynomial $q(x)$
modulo $n_j$ is still an almost uniformly distributed random monic
polynomial. It follows from the Agrawal-Biswas result \cite[Lemma
4.7]{AB03} applied to division of $g_j(x)$ by $q(x)$ over $\Z_{n_j}$
that $r'_j(x)$ will be nonzero with high probability. Consequently, by
Equation~\ref{eqn1} the remainder $\sum_{i=1}^k r'_i(x)e_i$ on
dividing $g(x)$ by $q(x)$ in the ring $R[x]$ is also nonzero with the
same probability.
    
\end{document}